\documentclass[copyright, creativecommons]{eptcs}
 
\usepackage[utf8]{inputenc}
\usepackage{tikz-cd}
\usepackage{tikzit}
\usepackage{amsmath}
\usepackage{amsthm}
\usepackage{amssymb}
\usepackage{braket}
\usepackage{appendix}
\usepackage{hyperref}
\usepackage{stmaryrd}
\usepackage{float} 

\usepackage[utf8]{inputenc}
\usepackage[english]{babel}
\usepackage[T1]{fontenc}
\usepackage[numbers,sort&compress]{natbib}
\usepackage{stmaryrd} 
\usepackage{mathtools}

\usepackage[shortlabels]{enumitem}

\usepackage{xspace}
\usepackage{algorithm}
\usepackage{algorithmicx}
\usepackage[noend]{algpseudocode}
\usepackage{verbatim}

\newcommand{\symd}{\mathbin{\Delta}\xspace}
\newcommand{\Symdi}[1]{\underset{\scriptstyle #1}{\scalebox{1.5}{$\symd$}}\,}

\makeatletter
\newcommand\etc{etc\@ifnextchar.{}{.\@}\xspace}

\makeatother

\newcommand{\odd}[1]{\mathsf{Odd}\left(#1\right)}

\newcommand{\odds}[2]{\mathsf{Odd}_{#1}\left(#2\right)}

\newcommand{\intf}[1]{\left\llbracket #1 \right\rrbracket} 

\newcommand{\ld}{\lambda}
\newcommand{\sse}{\subseteq}



\newcommand{\XYm}{\ensuremath\normalfont\textrm{XY}\xspace}
\newcommand{\XZm}{\normalfont\normalfont\textrm{XZ}\xspace}
\newcommand{\YZm}{\normalfont\normalfont\textrm{YZ}\xspace}

\newcommand{\LOG}{labelled open graph}

\usepackage{bm}

\newcommand{\abs}[1]{\ensuremath{\left| #1 \right|}}
\usepackage{amsmath,amsthm,amssymb}


\usepackage[color,leftbars]{changebar}


\definecolor{zx_red}{RGB}{232, 165, 165}
\definecolor{zx_green}{RGB}{216, 248, 216}

\tikzstyle{gn}=[rectangle,rounded corners=0.8em,fill=zx_green,draw=black,
  line width=0.8 pt,inner sep=3pt,minimum width=1.5em,minimum height=1.5em]
\tikzstyle{rn}=[rectangle,rounded corners=0.8em,fill=zx_red,draw=black,
  line width=0.8 pt,inner sep=3pt,minimum width=1.5em,minimum height=1.5em]
\tikzstyle{medium box}=[fill=white, draw=black, shape=rectangle, minimum width=0.75cm, minimum height=1cm]
\tikzstyle{small box}=[fill=white, draw=black, shape=rectangle, minimum width=0.5cm, minimum height=0.66cm]
\tikzstyle{H gate}=[fill={rgb,255: red,255; green,242; blue,52}, draw=black, shape=rectangle, minimum width=0.25cm, minimum height=0.25cm]
\tikzstyle{blackdot}=[fill=black, draw=black, shape=circle]

\tikzstyle{Hadamard edge}=[-, dashed, dash pattern=on 2pt off 1.5pt, thick, draw={rgb,255: red,68; green,136; blue,255}]


\tikzstyle{box}=[shape=rectangle, text height=1.5ex, text depth=0.25ex, yshift=0.5mm, fill=white, draw=black, minimum height=5mm, yshift=-0.5mm, minimum width=5mm, font={\small}]
\tikzstyle{Z dot}=[inner sep=0mm, minimum size=2mm, shape=circle, draw=black, fill={rgb,255: red,221; green,255; blue,221}]
\tikzstyle{Z phase dot}=[minimum size=1.2em, font={\footnotesize\boldmath}, shape=rectangle, rounded corners=0.5em, inner sep=0.2em, outer sep=-0.2em, scale=0.8, tikzit shape=circle, draw=black, fill={rgb,255: red,221; green,255; blue,221}, tikzit draw=blue]
\tikzstyle{X dot}=[Z dot, shape=circle, draw=black, fill={rgb,255: red,255; green,136; blue,136}]
\tikzstyle{X phase dot}=[Z phase dot, tikzit shape=circle, tikzit draw=blue, fill={rgb,255: red,255; green,136; blue,136}, font={\footnotesize\boldmath}]
\tikzstyle{hadamard}=[fill=yellow, draw=black, shape=rectangle, inner sep=0.6mm, minimum height=1.5mm, minimum width=1.5mm]
\tikzstyle{vertex}=[inner sep=0mm, minimum size=1mm, shape=circle, draw=black, fill=black]
\tikzstyle{vertex set}=[inner sep=0mm, minimum size=1mm, shape=circle, draw=black, fill=white, font={\footnotesize\boldmath}]
\tikzstyle{target}=[inner sep=0mm, minimum size=3mm, shape=circle, draw=black]

\tikzstyle{hadamard edge}=[-, dashed, dash pattern=on 2pt off 1.5pt, thick, draw={rgb,255: red,68; green,136; blue,255}]
\tikzstyle{brace edge}=[-, tikzit draw=blue, decorate, decoration={brace,amplitude=1mm,raise=-1mm}]
\tikzstyle{diredge}=[->]
\tikzstyle{dashed edge}=[-, dashed, dash pattern=on 2pt off 0.5pt, draw=black]

\newtheorem{theorem}{Theorem}[section]
\newtheorem{lemma}[theorem]{Lemma}
\newtheorem{proposition}[theorem]{Proposition}
\newtheorem{corollary}[theorem]{Corollary}
\newtheorem{observation}[theorem]{Observation}
\theoremstyle{definition}
\newtheorem{definition}[theorem]{Definition}
\newtheorem{example}[theorem]{Example}
\newtheorem{remark}[theorem]{Remark}

\newcommand{\symdiff}{\ensuremath{\bigtriangleup}}

\let\sse=\subseteq

\let\ld=\lambda
\def\abs#1{\left| #1 \right|}
\renewcommand{\t}[1]{\ensuremath{^{\otimes #1}}}

\newcommand{\phase}[2]{\begin{tikzpicture}[baseline=-0.1cm]
	\begin{pgfonlayer}{nodelayer}
		\node [style=none] (0) at (-0.4, 0) {};
		\node [style=none] (2) at (0.4, 0) {};
		\node [style=#1 phase dot] (5) at (0, 0) {#2};
	\end{pgfonlayer}
	\begin{pgfonlayer}{edgelayer}
		\draw (0.center) to (5);
		\draw (5) to (2.center);
	\end{pgfonlayer}
\end{tikzpicture}}
\newcommand{\had}{\begin{tikzpicture}[baseline=-0.1cm]
	\begin{pgfonlayer}{nodelayer}
		\node [style=none] (0) at (-0.4, 0) {};
		\node [style=none] (2) at (0.4, 0) {};
		\node [style=H gate] (5) at (0, 0) {};
	\end{pgfonlayer}
	\begin{pgfonlayer}{edgelayer}
		\draw (0.center) to (5);
		\draw (5) to (2.center);
	\end{pgfonlayer}
\end{tikzpicture}}

\title{Flow-preserving ZX-calculus Rewrite Rules for Optimisation and Obfuscation}

\author{Tommy McElvanney
\institute{School of Computer Science\\
University of Birmingham}
\email{txm639@student.bham.ac.uk}
\and
Miriam Backens
\institute{School of Computer Science\\
University of Birmingham}
\email{m.backens@cs.bham.ac.uk}
}

\begin{document}
\maketitle

\maketitle

\begin{abstract}
 In the one-way model of measurement-based quantum computation (MBQC),
computation proceeds via measurements on a resource state. So-called
flow conditions ensure that the overall computation is deterministic in
a suitable sense, with Pauli flow being the most general of these.
Computations, represented as measurement patterns, may be rewritten to
optimise resource use and for other purposes. Such rewrites need to
preserve the existence of flow to ensure the new pattern can still be
implemented deterministically. The majority of existing work in this
area has focused on rewrites that reduce the number of qubits, yet it
can be beneficial to increase the number of qubits for certain kinds of
optimisation, as well as for obfuscation.

In this work, we introduce several ZX-calculus rewrite rules that
increase the number of qubits and preserve the existence of Pauli flow.
These rules can be used to transform any measurement pattern into a
pattern containing only (general or Pauli) measurements within the
XY-plane. We also give the first flow-preserving rewrite rule that
allows measurement angles to be changed arbitrarily, and use this to
prove that the `neighbour unfusion' rule of Staudacher et al. preserves
the existence of Pauli flow. This implies it may be possible to reduce
the runtime of their two-qubit-gate optimisation procedure by removing
the need to regularly run the costly gflow-finding algorithm.
\end{abstract}

\section{Introduction}

The ZX-calculus is a graphical language for representing and reasoning about quantum computations, allowing us to conveniently represent and reason about computations in both the quantum circuit model and the one-way model of measurement based quantum-computation (MBQC), as well as to translate between the two. The ZX-calculus has various complete sets of rewrite rules, meaning that any two diagrams representing the same linear map can be transformed into each other entirely graphically \cite{Backens_2014,Jeandel_2018,Ng_2017} and provide tools for uses in optimization \cite{korbinian_2022} \cite{Duncan_2020}, obfuscation \cite{Cao_2022} and other areas of research in quantum computing.

The one-way model of MBQC involves the implementation of quantum computations by performing successive adaptive single-qubit measurements on a resource state \cite{Raussendorf_One-Way_2001}, largely without using any unitary operations. This contrasts with the more commonly-used circuit model and has applications in server-client scenarios as well as for certain quantum error-correcting codes.

An MBQC computation is given as a \emph{pattern}, which specifies the resource state -- usually a graph state -- and a sequence of measurements of certain types \cite{danos_parsimonious_2005}.
As measurements are non-deterministic, future measurements need to be adapted depending on the outcomes of past measurements to obtain an overall deterministic computation.
Yet not every pattern can be implemented deterministically.
Sufficient (and in some cases necessary) criteria for determinism are given by the different kinds of \emph{flow}, which define a partial order on the measured qubits and give instructions for how to adapt the future computation if a measurement yields the undesired outcome \cite{Danos_2006,Browne_2007} (cf.\ Section~\ref{s:pauli-flow}).  

In addition to the applications mentioned above, the flexible structure of MBQC patterns is also useful as a theoretical tool.
For example, translations between circuits and MBQC patterns have been used to trade off circuit depth versus qubit number \cite{broadbent_parallelizing_2009} or to reduce the number of $T$-gates in a Clifford+T circuit \cite{kissinger_reducing_2020}.
When translating an MBQC pattern (back) into a circuit, it is important that the pattern still have flow, as circuit extraction algorithms rely on flow \cite{Danos_2006,miyazaki_analysis_2015,Duncan_2020,Backens_2021}.

ZX-calculus diagrams directly corresponding to MBQC-patters are said to be in MBQC-form. Many of the standard ZX-calculus rewrite rules do not preserve the MBQC-form structure nor the existence of a flow, which we often want to preserve, thus circuit optimisation using MBQC and the ZX-calculus relies on proofs that preserve both MBQC-form and flow \cite{Duncan_2020,Backens_2021}. Much of the previous work on this has focused on rewrite rules that maintain or reduce the number of qubits, which find direct application in T-count optimisation \cite{Duncan_2020}. Nevertheless, it is sometimes desirable to increase the number of qubits in an MBQC pattern while preserving the existence of flow, such as for more involved optimisation strategies \cite{staudacher_optimization_2021} or for obfuscation.

In this work we introduce several ZX-calculus rewrite rules that preserve the MBQC-form structure as well as Pauli flow \cite{Browne_2007}, alongside proofs of this preservation. These rules have various applications, such as being used in obfuscation techniques for blind quantum computation \cite{Cao_2022}. Notably, we introduce the first Pauli flow preserving rewrite rule that allows us to change measurement angles arbitrarily, with all previous rules only allowing for changes that are integer multiples of $\frac{\pi}{2}$. Using this, we prove that the `neighbour unfusion’ rule of \cite{korbinian_2022} always preserves the existence of Pauli flow. Additionally, we provide a sufficient and a necessary condition for neighbour unfusion to preserve the existence of gflow \cite{Browne_2007}, a more restricted flow condition.

\section{Preliminaries}
\label{s:preliminaries}

In this section, we give an overview of the ZX-calculus and then use it to introduce measurement-based quantum computing.
We discuss the notion of flow that will be used in this paper and some existing rewrite rules which preserve the existence of this flow.

\subsection{The ZX-calculus}
The ZX-calculus is a diagrammatic language for reasoning about quantum computations. We will provide a short introduction here; for a more thorough overview, see \cite{VDW_2020, Coecke_2017}.

A ZX-diagram consists of \textit{spiders} and \textit{wires}. Diagrams are read from left to right: wires entering a diagram from the left are inputs while wires exiting the diagram on the right are outputs, like in the quantum circuit model. ZX-diagrams compose in
two distinct ways: \textit{horizontal composition}, which involves connecting the output wires of one diagram to the input wires of another, and \textit{vertical composition} (or the tensor product), which just involves drawing one diagram vertically above the other.
The linear map corresponding to a ZX-diagram $D$ is denoted by $\intf{D}$.

ZX-diagrams are generated by two families of spiders which may have any number of inputs or outputs, corresponding to the Z and X bases respectively. $Z$-spiders are drawn as green dots and $X$-spiders as red dots; with $m$ inputs, $n$ outputs, and using $(\cdot)\t{k}$ to denote a $k$-fold tensor power, we have:
\[
 \intf{\begin{tikzpicture}[baseline=-0.1cm, scale=0.5]
	\begin{pgfonlayer}{nodelayer}
		\node [style=Z phase dot] (0) at (0, 0) {$\alpha$};
		\node [style=none] (1) at (-1, 1) {};
		\node [style=none] (2) at (-1, 0.5) {};
		\node [style=none] (3) at (-1, -1) {};
		\node [style=none] (4) at (1, -1) {};
		\node [style=none] (5) at (1, 0.5) {};
		\node [style=none] (6) at (1, 1) {};
		\node [style=none] (7) at (-0.85, -0.15) {\vdots};
		\node [style=none] (8) at (0.85, -0.15) {\vdots};
	\end{pgfonlayer}
	\begin{pgfonlayer}{edgelayer}
		\draw (0) to (2.center);
		\draw [bend right=15] (0) to (1.center);
		\draw [bend left=15] (0) to (3.center);
		\draw [bend right=15] (0) to (4.center);
		\draw (0) to (5.center);
		\draw [bend left=15] (0) to (6.center);
	\end{pgfonlayer}
\end{tikzpicture}}
 = \ket{0}\t{n}\bra{0}\t{m} + e^{i\alpha}\ket{1}\t{n}\bra{1}\t{m} \qquad\qquad
 \intf{\begin{tikzpicture}[baseline=-0.1cm, scale=0.5]
	\begin{pgfonlayer}{nodelayer}
		\node [style=X phase dot] (0) at (0, 0) {$\alpha$};
		\node [style=none] (1) at (-1, 1) {};
		\node [style=none] (2) at (-1, 0.5) {};
		\node [style=none] (3) at (-1, -1) {};
		\node [style=none] (4) at (1, -1) {};
		\node [style=none] (5) at (1, 0.5) {};
		\node [style=none] (6) at (1, 1) {};
		\node [style=none] (7) at (-0.85, -0.15) {\vdots};
		\node [style=none] (8) at (0.85, -0.15) {\vdots};
	\end{pgfonlayer}
	\begin{pgfonlayer}{edgelayer}
		\draw (0) to (2.center);
		\draw [bend right=15] (0) to (1.center);
		\draw [bend left=15] (0) to (3.center);
		\draw [bend right=15] (0) to (4.center);
		\draw (0) to (5.center);
		\draw [bend left=15] (0) to (6.center);
	\end{pgfonlayer}
\end{tikzpicture}}
 = \ket{+}\t{n}\bra{+}\t{m} + e^{i\alpha}\ket{-}\t{n}\bra{-}\t{m}
\]

Spiders with exactly one input and output are unitary, in particular $\intf{\phase{Z}{$\alpha$}} = \ket{0}\bra{0} +e^{i\alpha}\ket{1}\bra{1} = Z_\alpha$ and $\intf{\phase{X}{$\alpha$}} = \ket{+}\bra{+} +e^{i\alpha}\ket{-}\bra{-} = X_\alpha$.

Two diagrams $D$ and $D'$ are said to be equivalent (written $D\cong D'$) if $\intf{D} = z \intf{D'}$ for some non-zero complex number $z$. For the rest of the paper, whenever we write a diagram equality we will mean equality up to some global scalar in this way.
For treatments of the ZX-calculus which do not ignore scalars see \cite{Backens_2015} for the stabilizer fragment, \cite{Jeandel_2018} for the Clifford+T fragment and \cite{Jeandel_2018_2, Ng_2017} for the full ZX-calculus.

The Hadamard gate $H=\ket{+}\bra{0}+\ket{-}\bra{1} \cong Z_{\frac{\pi}{2}} \circ X_{\frac{\pi}{2}} \circ Z_{\frac{\pi}{2}}$ will be widely used throughout the paper.
It has two common syntactic sugars -- a yellow square, or a blue dotted line -- with the latter only used between spiders:
\begin{center}
\begin{tikzpicture}[baseline=-0.1cm]
	\begin{pgfonlayer}{nodelayer}
		\node [style=none] (0) at (-2.75, 0) {};
		\node [style=none] (1) at (-1.75, 0) {};
		\node [style=none] (3) at (-0.75, 0) {};
		\node [style=none] (4) at (1.75, 0) {};
		\node [style=H gate] (5) at (-2.25, 0) {};
		\node [style=Z phase dot] (6) at (-0.25, 0) {$\frac{\pi}{2}$};
		\node [style=X phase dot] (7) at (0.5, 0) {$\frac{\pi}{2}$};
		\node [style=Z phase dot] (8) at (1.25, 0) {$\frac{\pi}{2}$};
		\node [style=none] (9) at (-1.25, 0) {$=$};
	\end{pgfonlayer}
	\begin{pgfonlayer}{edgelayer}
		\draw (0.center) to (5);
		\draw (5) to (1.center);
		\draw (3.center) to (6);
		\draw (6) to (7);
		\draw (7) to (8);
		\draw (8) to (4.center);
	\end{pgfonlayer}
\end{tikzpicture} \qquad\qquad \begin{tikzpicture}[baseline=-0.1cm]
	\begin{pgfonlayer}{nodelayer}
		\node [style=none] (0) at (0.5, 0) {};
		\node [style=Z dot] (1) at (1, 0) {};
		\node [style=H gate] (2) at (1.75, 0) {};
		\node [style=Z dot] (3) at (2.5, 0) {};
		\node [style=none] (4) at (3, 0) {};
		\node [style=none] (5) at (0, 0) {$=$};
		\node [style=none] (6) at (-2.5, 0) {};
		\node [style=Z dot] (7) at (-2, 0) {};
		\node [style=Z dot] (8) at (-1, 0) {};
		\node [style=none] (9) at (-0.5, 0) {};
	\end{pgfonlayer}
	\begin{pgfonlayer}{edgelayer}
		\draw (0.center) to (1);
		\draw (1) to (2);
		\draw (2) to (3);
		\draw (3) to (4.center);
		\draw (6.center) to (7);
		\draw (8) to (9.center);
		\draw [style=Hadamard edge] (7) to (8);
	\end{pgfonlayer}
\end{tikzpicture}
\end{center}

The ZX-calculus is equipped with a set of rewrite rules which can be used to transform a ZX-diagram into another diagram representing the same linear map.
The following rules, along with the definition of \had{}, is complete for stabilizer ZX-diagrams: any two stabilizer ZX-diagrams which correspond (up to non-zero scalar factor) to the same linear map can be rewritten into one another using these rules \cite{Backens_2014}.

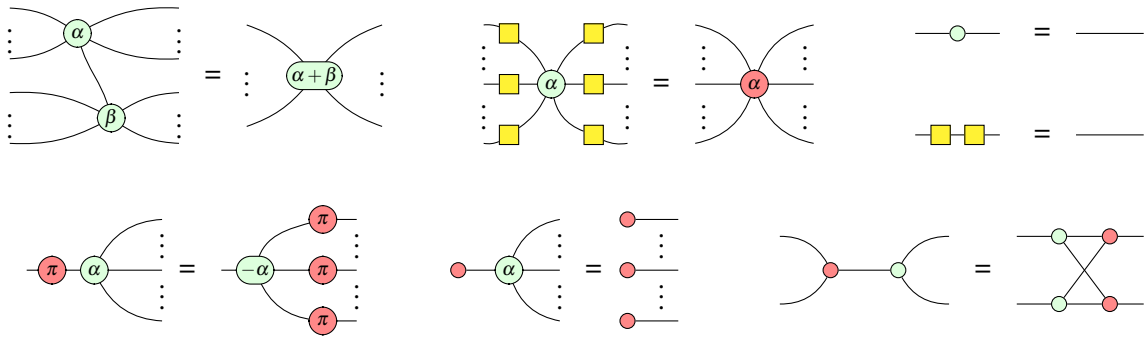
\begin{figure}
	\begin{tikzpicture}[scale=0.45]
	\begin{pgfonlayer}{nodelayer}
		\node [style=Z phase dot] (0) at (-17, 4.75) {$\alpha$};
		\node [style=Z phase dot] (1) at (-16, 2.25) {$\beta$};
		\node [style=none] (2) at (-19, 5.5) {};
		\node [style=none] (3) at (-19, 4) {};
		\node [style=none] (4) at (-19, 3) {};
		\node [style=none] (5) at (-19, 1.5) {};
		\node [style=none] (6) at (-14, 5.5) {};
		\node [style=none] (7) at (-14, 4) {};
		\node [style=none] (8) at (-14, 3) {};
		\node [style=none] (9) at (-14, 1.5) {};
		\node [style=none] (10) at (-19, 4.75) {$\vdots$};
		\node [style=none] (11) at (-19, 2.25) {$\vdots$};
		\node [style=none] (12) at (-14, 2.25) {$\vdots$};
		\node [style=none] (13) at (-14, 4.75) {$\vdots$};
		\node [style=none] (14) at (-13, 3.5) {=};
		\node [style=Z phase dot] (15) at (-10, 3.5) {$\alpha + \beta$};
		\node [style=none] (16) at (-12, 5) {};
		\node [style=none] (17) at (-12, 2) {};
		\node [style=none] (18) at (-8, 5) {};
		\node [style=none] (19) at (-8, 2) {};
		\node [style=none] (20) at (-12, 3.5) {$\vdots$};
		\node [style=none] (21) at (-8, 3.5) {$\vdots$};
		\node [style=Z phase dot] (23) at (-3, 3.25) {$\alpha$};
		\node [style=none] (24) at (-5, 5) {};
		\node [style=none] (25) at (-5, 3.25) {};
		\node [style=none] (26) at (-5, 1.5) {};
		\node [style=none] (27) at (-0.75, 5) {};
		\node [style=none] (28) at (-0.75, 3.25) {};
		\node [style=none] (29) at (-0.75, 1.5) {};
		\node [style=H gate] (30) at (-4.25, 4.75) {};
		\node [style=H gate] (31) at (-4.25, 3.25) {};
		\node [style=H gate] (32) at (-4.25, 1.75) {};
		\node [style=H gate] (33) at (-1.75, 1.75) {};
		\node [style=H gate] (34) at (-1.75, 3.25) {};
		\node [style=H gate] (35) at (-1.75, 4.75) {};
		\node [style=none] (36) at (0.25, 3.25) {=};
		\node [style=X phase dot] (38) at (3, 3.25) {$\alpha$};
		\node [style=none] (39) at (1.25, 3.25) {};
		\node [style=none] (40) at (1.25, 5) {};
		\node [style=none] (41) at (1.25, 1.5) {};
		\node [style=none] (42) at (4.75, 5) {};
		\node [style=none] (43) at (4.75, 3.25) {};
		\node [style=none] (44) at (4.75, 1.5) {};
		\node [style=none] (45) at (7.75, 4.75) {};
		\node [style=none] (46) at (10.25, 4.75) {};
		\node [style=none] (47) at (12.5, 4.75) {};
		\node [style=none] (48) at (14.5, 4.75) {};
		\node [style=none] (49) at (11.5, 4.75) {=};
		\node [style=Z dot] (51) at (9, 4.75) {};
		\node [style=none] (52) at (7.75, 1.75) {};
		\node [style=none] (53) at (10.25, 1.75) {};
		\node [style=none] (54) at (12.5, 1.75) {};
		\node [style=none] (55) at (14.5, 1.75) {};
		\node [style=none] (56) at (11.5, 1.75) {=};
		\node [style=H gate] (58) at (8.5, 1.75) {};
		\node [style=H gate] (59) at (9.5, 1.75) {};
		\node [style=none] (60) at (-18.5, -2.25) {};
		\node [style=X phase dot] (61) at (-17.75, -2.25) {$\pi$};
		\node [style=Z phase dot] (62) at (-16.5, -2.25) {$\alpha$};
		\node [style=none] (63) at (-14.5, -0.75) {};
		\node [style=none] (64) at (-14.5, -2.25) {};
		\node [style=none] (65) at (-14.5, -3.75) {};
		\node [style=none] (66) at (-14.5, -1.3) {$\vdots$};
		\node [style=none] (67) at (-14.5, -2.8) {$\vdots$};
		\node [style=none] (68) at (-12.75, -2.25) {};
		\node [style=Z phase dot] (70) at (-11.75, -2.25) {$-\alpha$};
		\node [style=none] (71) at (-8.75, -0.75) {};
		\node [style=none] (72) at (-8.75, -2.25) {};
		\node [style=none] (73) at (-8.75, -3.75) {};
		\node [style=none] (74) at (-8.75, -1.3) {$\vdots$};
		\node [style=none] (75) at (-8.75, -2.8) {$\vdots$};
		\node [style=none] (76) at (-13.75, -2.25) {=};
		\node [style=X phase dot] (77) at (-9.75, -0.75) {$\pi$};
		\node [style=X phase dot] (78) at (-9.75, -2.25) {$\pi$};
		\node [style=X phase dot] (79) at (-9.75, -3.75) {$\pi$};
		\node [style=X dot] (82) at (-5.75, -2.25) {};
		\node [style=Z phase dot] (83) at (-4.25, -2.25) {$\alpha$};
		\node [style=none] (84) at (-2.75, -0.75) {};
		\node [style=none] (85) at (-2.75, -2.25) {};
		\node [style=none] (86) at (-2.75, -3.75) {};
		\node [style=none] (87) at (-2.75, -1.3) {$\vdots$};
		\node [style=none] (88) at (-2.75, -2.8) {$\vdots$};
		\node [style=none] (89) at (-1.75, -2.25) {=};
		\node [style=X dot] (91) at (-0.75, -0.75) {};
		\node [style=X dot] (92) at (-0.75, -2.25) {};
		\node [style=X dot] (93) at (-0.75, -3.75) {};
		\node [style=none] (94) at (0.75, -0.75) {};
		\node [style=none] (95) at (0.75, -2.25) {};
		\node [style=none] (96) at (0.75, -3.75) {};
		\node [style=none] (97) at (3.75, -1.25) {};
		\node [style=none] (98) at (3.75, -3.25) {};
		\node [style=X dot] (99) at (5.25, -2.25) {};
		\node [style=Z dot] (100) at (7.25, -2.25) {};
		\node [style=none] (101) at (8.75, -1.25) {};
		\node [style=none] (102) at (8.75, -3.25) {};
		\node [style=none] (103) at (10.75, -1.25) {};
		\node [style=none] (104) at (10.75, -3.25) {};
		\node [style=none] (107) at (14.5, -1.25) {};
		\node [style=none] (108) at (14.5, -3.25) {};
		\node [style=Z dot] (109) at (12, -1.25) {};
		\node [style=Z dot] (110) at (12, -3.25) {};
		\node [style=X dot] (111) at (13.5, -1.25) {};
		\node [style=X dot] (112) at (13.5, -3.25) {};
		\node [style=none] (113) at (9.75, -2.25) {=};
		\node [style=none] (115) at (0.25, -1.3) {$\vdots$};
		\node [style=none] (116) at (0.25, -2.8) {$\vdots$};
		\node [style=none] (125) at (-5, 4.25) {$\vdots$};
		\node [style=none] (126) at (-0.75, 4.25) {$\vdots$};
		\node [style=none] (127) at (-0.75, 2.5) {$\vdots$};
		\node [style=none] (129) at (1.5, 4.25) {$\vdots$};
		\node [style=none] (130) at (1.5, 2.5) {$\vdots$};
		\node [style=none] (131) at (4.5, 2.5) {$\vdots$};
		\node [style=none] (132) at (4.5, 4.25) {$\vdots$};
		\node [style=none] (133) at (-5, 2.5) {$\vdots$};
	\end{pgfonlayer}
	\begin{pgfonlayer}{edgelayer}
		\draw [bend right=15] (0) to (2.center);
		\draw [bend left=15] (0) to (3.center);
		\draw [bend right=15] (1) to (4.center);
		\draw [bend left=15] (1) to (5.center);
		\draw [in=105, out=-75] (0) to (1);
		\draw [bend left=15] (1) to (8.center);
		\draw [bend right=15] (1) to (9.center);
		\draw [bend right=15] (0) to (7.center);
		\draw [bend left=15] (0) to (6.center);
		\draw [bend right=15] (15) to (16.center);
		\draw [bend left=15] (15) to (17.center);
		\draw [bend left=15] (15) to (18.center);
		\draw [bend right=15] (15) to (19.center);
		\draw [bend right=15] (23) to (30);
		\draw [bend right=15, looseness=0.75] (30) to (24.center);
		\draw (23) to (31);
		\draw (31) to (25.center);
		\draw [bend left=15] (23) to (32);
		\draw [bend left=15] (32) to (26.center);
		\draw [bend right] (23) to (33);
		\draw [bend right=15] (33) to (29.center);
		\draw (23) to (34);
		\draw (34) to (28.center);
		\draw [bend left=15] (23) to (35);
		\draw [bend left=15, looseness=1.25] (35) to (27.center);
		\draw [bend right] (38) to (40.center);
		\draw (39.center) to (38);
		\draw [bend left] (38) to (41.center);
		\draw [bend right] (38) to (44.center);
		\draw (43.center) to (38);
		\draw [bend left] (38) to (42.center);
		\draw (45.center) to (51);
		\draw (51) to (46.center);
		\draw (47.center) to (48.center);
		\draw (54.center) to (55.center);
		\draw (52.center) to (58);
		\draw (58) to (59);
		\draw (59) to (53.center);
		\draw (60.center) to (61);
		\draw (61) to (62);
		\draw [bend left] (62) to (63.center);
		\draw (62) to (64.center);
		\draw [bend right] (62) to (65.center);
		\draw (68.center) to (70);
		\draw [in=-165, out=75] (70) to (77);
		\draw (70) to (78);
		\draw [bend right] (70) to (79);
		\draw (79) to (73.center);
		\draw (78) to (72.center);
		\draw (77) to (71.center);
		\draw (82) to (83);
		\draw [bend left] (83) to (84.center);
		\draw (83) to (85.center);
		\draw [bend right] (83) to (86.center);
		\draw (91) to (94.center);
		\draw (92) to (95.center);
		\draw (93) to (96.center);
		\draw (99) to (100);
		\draw [bend right] (99) to (97.center);
		\draw [bend left] (99) to (98.center);
		\draw [bend left] (100) to (101.center);
		\draw [bend right] (100) to (102.center);
		\draw (103.center) to (109);
		\draw (111) to (107.center);
		\draw (112) to (108.center);
		\draw (110) to (104.center);
		\draw (109) to (112);
		\draw (110) to (111);
		\draw (109) to (111);
		\draw (110) to (112);
	\end{pgfonlayer}
\end{tikzpicture}
	\caption{A complete set of rewrite rules for the scalar-free stabilizer ZX-calculus. Each rule also holds with the colours or the directions reversed.}
	\label{fig:ZX-rules}
\end{figure}

\subsection{Measurement-based Quantum computation}

Measurement-based Quantum computation (MBQC) is a particularly interesting model of quantum computation with no classical analogue. In the one-way model of MBQC, one first constructs a highly entangled resource state that can be independent of the specific computation that one wants to perform (only depending on the `size' of the computation) by preparing qubits in the $\ket{+}$ state and applying $CZ$-gates to certain pairs of qubits. The computation then proceeds by performing single qubit measurements in a specified order. MBQC is a universal model for quantum computation -- any computation can be performed by choosing an appropriate resource state and then performing a certain combination of measurements on said state.

Measurement-based computations are traditionally expressed as \textit{measurement patterns}, which use a sequence of commands to describe how the resource state is constructed and how the computation proceeds \cite{danos_parsimonious_2005}.
As the resource states are graph states, a graphical representation of MBQC protocols can be more intuitive; we shall therefore introduce MBQC with ZX-diagrams.

\begin{definition}[\cite{Duncan_2009}]
A \emph{graph state diagram} is a ZX-diagram where each vertex is a (phase-free) green spider,
each edge connecting spiders has a Hadamard gate on it, and
there is a single output wire incident on each vertex.
\end{definition}

\begin{definition}\cite[Definition~2.18]{Backens_2021}
\label{def:MBQC-form}
A ZX-diagram is in \emph{MBQC-form} if it consists of a
graph state diagram in which each vertex of the graph may furthermore be connected to an input (in addition to its output), and a measurement effect instead of its output.
\end{definition}

MBQC restricts the allowed single-qubit measurements to three planes of the Bloch sphere: those spanned by the eigenstates of two Pauli matrices, called the XY, YZ and XZ planes. Each time a qubit $u$ is measured in a plane $\lambda(u)$ at an angle $\alpha$, one may obtain either the desired outcome, denoted $\bra{+_{\lambda(u),\alpha}}$, or the undesired outcome $\bra{-_{\lambda(u), \alpha}} = \bra{+_{\lambda(u), \alpha+\pi}}$.
Measurements where the angle is an integer multiple of $\frac{\pi}{2}$ are Pauli measurements; the corresponding measurement type is denoted by simply $X$, $Y$, or $Z$.
The ZX-diagram corresponding to each (desired) measurement outcome is given in Table~\ref{tab:MBQC-ZX}.
The structure of an MBQC protocol is formalised as follows.

\begin{table}
	\centering	
	\renewcommand{\arraystretch}{1.5}
	\begin{tabular}{c||c|c|c|c|c|c|c|c|c|}
		operator & $\bra{+_{XY, \alpha}}_i$ & $\bra{+_{XZ, \alpha}}_i$ & $\bra{+_{YZ, \alpha}}_i$ & $\bra{+_{X,0}}_i$ & $\bra{+_{Y,0}}_i$ & $\bra{+_{Z,0}}_i$ & $\bra{+_{X,\pi}}_i$ & $\bra{+_{Y,\pi}}_i$ & $\bra{+_{Z,\pi}}_i$ \\ \hline
		diagram &  \begin{tikzpicture}[baseline=-0.05]
	\begin{pgfonlayer}{nodelayer}
		\node [style=none] (0) at (-0.25, 0) {};
		\node [style=Z phase dot] (1) at (0.25, 0) {$-\alpha(i)$};
	\end{pgfonlayer}
	\begin{pgfonlayer}{edgelayer}
		\draw (0.center) to (1);
	\end{pgfonlayer}
\end{tikzpicture}  &  \begin{tikzpicture}[baseline=-0.05, scale=.8]
	\begin{pgfonlayer}{nodelayer}
		\node [style=none] (0) at (-0.25, 0) {};
		\node [style=Z phase dot] (1) at (0.25, 0) {$\frac{\pi}{2}$};
		\node [style=X phase dot] (2) at (1, 0) {$\alpha(i)$};
	\end{pgfonlayer}
	\begin{pgfonlayer}{edgelayer}
		\draw (0.center) to (1);
		\draw (1) to (2);
	\end{pgfonlayer}
\end{tikzpicture} &  \begin{tikzpicture}[baseline=-0.05]
	\begin{pgfonlayer}{nodelayer}
		\node [style=none] (0) at (-0.25, 0) {};
		\node [style=X phase dot] (1) at (0.25, 0) {$\alpha(i)$};
	\end{pgfonlayer}
	\begin{pgfonlayer}{edgelayer}
		\draw (0.center) to (1);
	\end{pgfonlayer}
\end{tikzpicture} &  \begin{tikzpicture}[baseline=-0.05]
	\begin{pgfonlayer}{nodelayer}
		\node [style=none] (0) at (-0.25, 0) {};
		\node [style=Z dot] (1) at (0.25, 0) {};
	\end{pgfonlayer}
	\begin{pgfonlayer}{edgelayer}
		\draw (0.center) to (1);
	\end{pgfonlayer}
\end{tikzpicture}  &  \begin{tikzpicture}[baseline=-0.05]
	\begin{pgfonlayer}{nodelayer}
		\node [style=none] (0) at (-0.25, 0) {};
		\node [style=Z phase dot] (1) at (0.25, 0) {-$\frac{\pi}{2}$};
	\end{pgfonlayer}
	\begin{pgfonlayer}{edgelayer}
		\draw (0.center) to (1);
	\end{pgfonlayer}
\end{tikzpicture} &  \begin{tikzpicture}[baseline=-0.05]
	\begin{pgfonlayer}{nodelayer}
		\node [style=none] (0) at (-0.25, 0) {};
		\node [style=X dot] (1) at (0.25, 0) {};
	\end{pgfonlayer}
	\begin{pgfonlayer}{edgelayer}
		\draw (0.center) to (1);
	\end{pgfonlayer}
\end{tikzpicture} &  \begin{tikzpicture}[baseline=-0.05]
	\begin{pgfonlayer}{nodelayer}
		\node [style=none] (0) at (-0.25, 0) {};
		\node [style=Z phase dot] (1) at (0.25, 0) {$\pi$};
	\end{pgfonlayer}
	\begin{pgfonlayer}{edgelayer}
		\draw (1) to (0.center);
	\end{pgfonlayer}
\end{tikzpicture} &  \begin{tikzpicture}[baseline=-0.05]
	\begin{pgfonlayer}{nodelayer}
		\node [style=none] (0) at (-0.25, 0) {};
		\node [style=Z phase dot] (1) at (0.25, 0) {$\frac{\pi}{2}$};
	\end{pgfonlayer}
	\begin{pgfonlayer}{edgelayer}
		\draw (1) to (0.center);
	\end{pgfonlayer}
\end{tikzpicture} &  \begin{tikzpicture}[baseline=-0.05]
	\begin{pgfonlayer}{nodelayer}
		\node [style=none] (0) at (-0.25, 0) {};
		\node [style=X phase dot] (1) at (0.25, 0) {$\pi$};
	\end{pgfonlayer}
	\begin{pgfonlayer}{edgelayer}
		\draw (1) to (0.center);
	\end{pgfonlayer}
\end{tikzpicture}\\ 
	\end{tabular}
	\renewcommand{\arraystretch}{1}
	\caption{MBQC measurement effects in Dirac notation and their corresponding ZX-diagrams}
	\label{tab:MBQC-ZX}
\end{table}

\begin{definition} 	
A \emph{labelled open graph} is a tuple $\Gamma = (G, I, O, \lambda)$,
where $G=(V,E)$ is a simple undirected graph, $I\subseteq V$ is a set of input vertices, $O \subseteq V$ is a set of output vertices, and $\lambda: V\setminus O \to \{X, Y, Z, \XYm, \XZm, \YZm\}$ assigns a measurement plane or Pauli measurement to each non-output vertex.
\end{definition}

\subsection{Pauli flow}
\label{s:pauli-flow}

Measurement-based computations are inherently probabilistic because measurements are probabilistic.
Computations can be made deterministic overall (up to Pauli corrections on the outputs) by tracking which measurements result in undesired outcomes and then correcting for these by adapting future measurements.
A sufficient (and in some cases necessary) condition for this to be possible on a given labelled open graph is \emph{Pauli flow}.
In the following, $\mathcal{P}(A)$ denotes the powerset of a set $A$, $N_G(v) = \{w \in V \mid (v,w) \in E\}$ is the set of neighbours of a vertex $v$ in a graph $G = (V,E)$. Furthermore, $\odds{G}{A} = \{v \in V \mid \abs{N_G(v) \cap A} \equiv 1 \bmod 2\}$ is the set of vertices in the graph $G=(V,E)$ which have an odd number of neighbours in $A\subseteq V$; this is referred to as the odd neighbourhood (in $G$) of $A$.

\begin{definition}[{\cite[Definition 5]{Browne_2007}}]\label{def:Pauli-flow}
A labelled open graph $(G, I, O, \lambda)$ has Pauli flow if there exists a map $p: V\setminus O \to \mathcal{P}(V\setminus I)$ and a partial order $\prec$ over V such that for all $u \in V\setminus O$,
\begin{enumerate}
    \item if $v\in p(u)$, $v \not = u$ and $\lambda(v)\not \in \{X,Y\}$, then $u \prec v$. \label{P1}
    \item if $v\in \mathrm{Odd}_G(p(u))$, $v \not = u$ and $\lambda(v)\not \in \{Y, Z\}$, then $u \prec v$. \label{P2}
    \item if $\lnot (u \prec v)$, $u \not = v$	 and $\lambda(v) = Y$, then $v \in p(u) \Longleftrightarrow v \in \mathrm{Odd}_G(p(u))$. \label{P3}
    \item if $\lambda(u) = XY$, then $u\not \in p(u)$ and $u \in \mathrm{Odd}_G(p(u))$. \label{P4}
    \item if $\lambda(u) = XZ$, then $u \in p(u)$ and $u \in \mathrm{Odd}_G(p(u))$. \label{P5}
    \item if $\lambda(u) = YZ$, then $u\in p(u)$ and $u \not \in \mathrm{Odd}_G(p(u))$.\label{P6}
    \item if $\lambda(u) = X$, then $u \in \mathrm{Odd}_G(p(u))$. \label{P7}
    \item if $\lambda(u) = Z$, then $u \in p(u)$. \label{P8}
    \item if $\lambda(u) = Y$ then either $u \in p(u)$ and $u\not \in \mathrm{Odd}_G(p(u))$ or $u \not \in p(u)$ and $u \in \mathrm{Odd}_G(p(u))$. \label{P9}
\end{enumerate}
\end{definition}

Here, the partial order is related to time order in which the qubits need to be measured.
The set $p(u)$ denotes qubits that are modified by Pauli-$X$ to compensate for an undesired measurement outcome on $u$, $\mathrm{Odd}_G(p(u))$ denotes the set of vertices that are modified by Pauli-$Z$.

Pauli flow is a sufficient condition for strong, stepwise and uniform determinism: this means all branches of the computation should implement the same linear operator up to a phase, any interval of the computation should be deterministic on its own, and the computation should be deterministic for all choices of measurement angles that satisfy $\ld$ \cite[p.~5]{Browne_2007}.
Pauli flow (and related flow conditions) are particularly interesting from a ZX-calculus perspective as there are polynomial-time algorithms for extracting circuits from MBQC-form ZX-diagrams with flow \cite{Duncan_2020,Backens_2021,Simmons_2021}, while circuit extraction from general ZX-diagrams is \#P-hard \cite{debeaudrap_2022}.

There are certain Pauli flows where the side effects of any correction are particularly limited, these are called \emph{focused} and they exist whenever a labelled open graph has Pauli flow.

\begin{definition}[{rephrased from \cite[Definition~4.3]{Simmons_2021}}]\label{def:focused_Pauli-flow}
 Suppose the labelled open graph $(G, I, O, \lambda)$ has a Pauli flow $(p,\prec)$.
 Define $S_u = V\setminus(O\cup\{u\})$ for all $u\in V$.
 Then $(p,\prec)$ is \emph{focused} if for all $u\in V\setminus O$:
 \begin{itemize}
  \item Any $v\in S_u\cap p(u)$ satisfies $\ld(v)\in\{\XYm, X,Y\}$.
  \item Any $v\in S_u\cap \odd{p(u)}$ satisfies $\ld(v)\in\{\XZm, \YZm, Y, Z\}$.
  \item Any $v\in S_u$ such that $\ld(v)=Y$ satisfies $v\in p(u)$ if and only if $v\in\odd{p(u)}$.
 \end{itemize}
\end{definition}

\begin{lemma}[{\cite[Lemma~4.6]{Simmons_2021}}]
 If a labelled open graph has Pauli flow, then it has a focused Pauli flow.
\end{lemma}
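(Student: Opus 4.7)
I would construct a focused Pauli flow from an arbitrary Pauli flow $(p,\prec)$ by iteratively eliminating focusing violations using a symmetric-difference absorption step. Concretely, whenever some $v\in S_u$ witnesses a failure of the focused conditions of Definition~\ref{def:focused_Pauli-flow} at $u$, I replace $p(u)$ by $p(u)\symdiff p(v)$, leaving $p$ unchanged elsewhere and keeping the order $\prec$. There are three types of violation witness: $v\in p(u)$ with $\ld(v)\in\{\XZm,\YZm,Z\}$, $v\in\odd{p(u)}$ with $\ld(v)\in\{\XYm,X\}$, and $\ld(v)=Y$ with the biconditional $v\in p(u)\iff v\in\odd{p(u)}$ failing. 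In the first case $v\in p(v)$ by P5, P6, P8; in the second $v\in\odd{p(v)}$ by P4, P7; and in the third, the two memberships have opposite parities by P9. Hence in each case the absorption toggles exactly the problematic membership(s) of $v$, removing that specific violation.

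\textbf{Invariance under absorption.} The main technical step is to verify that $(p',\prec)$ with $p'(u)=p(u)\symdiff p(v)$ is again a Pauli flow. Only the conditions at $u$ can be affected, and they behave linearly under symmetric difference via $\odd{A\symdiff B}=\odd A\symdiff \odd B$. In every violation type, the original conditions P1, P2, or P3 at $u$ force $u\prec v$. Any vertex $w\neq v$ newly appearing in $p'(u)$ or $\odd{p'(u)}$ lies in $p(v)\cup\odd{p(v)}$; either $w$ is an output (hence unconstrained) or $w\in S_v$, in which case the conditions P1--P3 at $v$ combined transitively with $u\prec v$ give the required order-theoretic conclusions at $u$. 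For P4--P9 at $u$, applying P1 and P2 to $v$ shows that $u\notin p(v)$ and $u\notin\odd{p(v)}$ whenever $\ld(u)\notin\{X,Y\}$ and $\ld(u)\notin\{Y,Z\}$ respectively (otherwise $v\prec u$ would follow, contradicting $u\prec v$); the remaining case $\ld(u)=Y$ is dispatched using P3 at $v$, which gives $u\in p(v)\iff u\in\odd{p(v)}$ precisely because $\neg(v\prec u)$.

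\textbf{Termination.} The main obstacle is to ensure the procedure halts. I would process the vertices of $V\setminus O$ along a linear extension of $\prec$, from $\prec$-maximal to $\prec$-minimal, so that when processing $u$ every potential witness $v\succ u$ already has $p(v)$ focused. Under this strategy, any vertex added to $p'(u)\cup\odd{p'(u)}$ by the absorption is either $v$ itself (handled by the three cases above, whose new memberships all carry admissible labels) or some $w\in S_v$ whose label is constrained by the focused property of $p(v)$ to be ``good'' for membership in $p(u)$ or $\odd{p(u)}$; the $Y$-biconditional for any newly involved $Y$-measured $w\in S_v\cap S_u$ is inherited from the biconditional already enforced by focusedness of $p(v)$. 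Hence each absorption strictly decreases the finite number of violations at $u$ and introduces no new ones, so the processing of $u$ terminates in a focused $p(u)$; iterating over all $u\in V\setminus O$ produces the required focused Pauli flow.
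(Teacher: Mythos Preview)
The paper does not prove this lemma; it is quoted without proof from \cite[Lemma~4.6]{Simmons_2021}. Your proposal reproduces the standard focusing argument used there: absorb correction sets via symmetric difference, processing vertices along a linear extension of $\prec$ from top to bottom so that every witness $v\succ u$ already has $p(v)$ focused when $u$ is treated.

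The sketch is correct. The three points that need checking all go through as you indicate: (i) every violation witness $v$ at $u$ satisfies $u\prec v$ by P1, P2 or the contrapositive of P3; (ii) the absorption $p(u)\mapsto p(u)\symdiff p(v)$ preserves P1--P3 at $u$ by transitivity through $v$ (for P3 one uses that $\neg(u\prec w)$ together with $u\prec v$ forces $\neg(v\prec w)$), and preserves P4--P9 at $u$ because $u\prec v$ excludes $u$ from $p(v)$ or $\odd{p(v)}$ except in the Pauli cases, where P3 at $v$ or the irrelevance of the extra membership handles it; (iii) when $p(v)$ is already focused, the absorption removes the violation at $v$, adds to $p'(u)$ only vertices with labels in $\{\XYm,X,Y\}$ and to $\odd{p'(u)}$ only vertices with labels in $\{\XZm,\YZm,Y,Z\}$, and leaves the $Y$-biconditional at every other $w$ unchanged since $(w\in p(v))\oplus(w\in\odd{p(v)})=0$ by focusedness. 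Hence the violation count at $u$ strictly decreases and the procedure terminates.
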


If the codomain of $\ld$ in Definition~\ref{def:Pauli-flow} is $\{\XYm,\XZm,\YZm\}$, the flow is called a gflow \cite{Browne_2007}.
Similarly, for Definition~\ref{def:focused_Pauli-flow}, the restriction to measurement labels $\{\XYm,\XZm,\YZm\}$ is called a focused gflow \cite{Backens_2021}.
If a labelled open graph has gflow, then it has a focused gflow \cite[Proposition~3.14]{Backens_2021}.

\subsection{Existing flow-preserving rewrite rules}
\label{s:existing-rules}

The basic ZX-calculus rewrite rules in Figure~\ref{fig:ZX-rules} do not generally preserve even the MBQC-form structure of a ZX-calculus diagram.
Yet there are some more complex derived rewrite rules that are known to preserve both the MBQC-form structure and the existence of a flow.
These rules were previously considered in the context of gflow \cite{Duncan_2020} and extended gflow \cite{Backens_2021}; the Pauli-flow preservation proofs are due to \cite{Simmons_2021}.
The simplest of these rules are $Z$-deletion and $Z$-insertion:

\begin{lemma}[{\cite[Lemma D.6]{Simmons_2021}}]
\label{lem:Z-delete}
Deleting a $Z$-measured vertex preserves the existence of Pauli flow.
\begin{center}
    \begin{tikzpicture}[scale=0.5]
	\begin{pgfonlayer}{nodelayer}
		\node [style=Z dot] (0) at (-3.25, 0) {};
		\node [style=Z dot] (1) at (-4.25, 1.25) {};
		\node [style=Z dot] (2) at (-4.25, -1.25) {};
		\node [style=Z dot] (3) at (-2.25, -1.25) {};
		\node [style=Z dot] (4) at (-2.25, 1.25) {};
		\node [style=none] (5) at (-1, 2) {};
		\node [style=none] (6) at (-1, 0.5) {};
		\node [style=none] (7) at (-1, -0.5) {};
		\node [style=none] (8) at (-1, -2) {};
		\node [style=none] (9) at (-5.75, 2) {};
		\node [style=none] (10) at (-5.75, 0.5) {};
		\node [style=none] (11) at (-5.75, -0.5) {};
		\node [style=none] (12) at (-5.75, -2) {};
		\node [style=none] (13) at (-5.75, 1.25) {$\vdots$};
		\node [style=none] (15) at (-1, -1.25) {$\vdots$};
		\node [style=none] (16) at (-1, 1.25) {$\vdots$};
		\node [style=none] (17) at (-5.75, -1.25) {$\vdots$};
		\node [style=X phase dot] (18) at (-3.25, 1.5) {$a\pi$};
		\node [style=none] (19) at (0, 0) {=};
		\node [style=none] (37) at (-4.25, 0.25) {$\vdots$};
		\node [style=none] (38) at (-2.25, 0.25) {$\vdots$};
		\node [style=Z phase dot] (42) at (2.25, 1.25) {$a\pi$};
		\node [style=Z phase dot] (43) at (2.25, -1.25) {$a\pi$};
		\node [style=Z phase dot] (44) at (4.25, -1.25) {$a\pi$};
		\node [style=Z phase dot] (45) at (4.25, 1.25) {$a\pi$};
		\node [style=none] (46) at (5.5, 2) {};
		\node [style=none] (47) at (5.5, 0.5) {};
		\node [style=none] (48) at (5.5, -0.5) {};
		\node [style=none] (49) at (5.5, -2) {};
		\node [style=none] (50) at (0.75, 2) {};
		\node [style=none] (51) at (0.75, 0.5) {};
		\node [style=none] (52) at (0.75, -0.5) {};
		\node [style=none] (53) at (0.75, -2) {};
		\node [style=none] (54) at (0.75, 1.25) {$\vdots$};
		\node [style=none] (55) at (5.5, -1.25) {$\vdots$};
		\node [style=none] (56) at (5.5, 1.25) {$\vdots$};
		\node [style=none] (57) at (0.75, -1.25) {$\vdots$};
		\node [style=none] (59) at (2.25, 0.25) {$\vdots$};
		\node [style=none] (60) at (4.25, 0.25) {$\vdots$};
	\end{pgfonlayer}
	\begin{pgfonlayer}{edgelayer}
		\draw (1) to (9.center);
		\draw (1) to (10.center);
		\draw (2) to (11.center);
		\draw (2) to (12.center);
		\draw (3) to (8.center);
		\draw (3) to (7.center);
		\draw (4) to (6.center);
		\draw (4) to (5.center);
		\draw (18) to (0);
		\draw [style=Hadamard edge] (1) to (0);
		\draw [style=Hadamard edge] (0) to (4);
		\draw [style=Hadamard edge] (0) to (3);
		\draw [style=Hadamard edge] (0) to (2);
		\draw (42) to (50.center);
		\draw (42) to (51.center);
		\draw (43) to (52.center);
		\draw (43) to (53.center);
		\draw (44) to (49.center);
		\draw (44) to (48.center);
		\draw (45) to (47.center);
		\draw (45) to (46.center);
	\end{pgfonlayer}
\end{tikzpicture}
\end{center}
\end{lemma}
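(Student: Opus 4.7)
The plan is to leverage the focused Pauli flow lemma to reduce a messy case analysis to a one-line check. Starting from any Pauli flow on $\Gamma = (G, I, O, \ld)$, I would first invoke Lemma~4.6 of \cite{Simmons_2021} to obtain a \emph{focused} Pauli flow $(p, \prec)$ on $\Gamma$. The key observation is that since $\ld(u) = Z$, for any $w \in V \setminus O$ with $w \neq u$ we have $u \in S_w$, and the first clause of Definition~\ref{def:focused_Pauli-flow} applied to $w$ forces $u \notin p(w)$: otherwise $u \in S_w \cap p(w)$ would require $\ld(u) \in \{\XYm, X, Y\}$, contradicting $\ld(u) = Z$.

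Let $V' = V \setminus \{u\}$ and $G' = G \setminus \{u\}$, and let $\Gamma' = (G', I, O, \ld|_{V' \setminus O})$. I would define the candidate Pauli flow on $\Gamma'$ by setting $p'(w) = p(w)$ for all $w \in V' \setminus O$ and taking $\prec'$ to be the restriction of $\prec$ to $V'$. This is well-defined since $p(w) \sse V \setminus I$ together with $u \notin p(w)$ gives $p'(w) \sse V' \setminus I$.

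The verification of conditions P1--P9 for $(p', \prec')$ rests on the identity
\[
\odds{G'}{A} \;=\; \odds{G}{A} \cap V' \qquad \text{for all } A \sse V',
\]
which holds because for any $v \in V'$ one has $N_{G'}(v) = N_G(v) \setminus \{u\}$, and since $u \notin A$ the parity $|N_G(v) \cap A|$ is unchanged. Given this, each condition P1--P9 for $(p', \prec')$ in $\Gamma'$ follows by direct inspection from the corresponding condition for $(p, \prec)$ in $\Gamma$: the sets $p'(w)$ and $\odds{G'}{p'(w)}$ agree with $p(w)$ and $\odds{G}{p(w)}$ on $V'$, and the labelling is unchanged on $V' \setminus O$.

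The only real obstacle is ensuring $u \notin p(w)$ for $w \neq u$, and this is precisely what focusing buys us. Without focusing, one would have to modify $p$ to eliminate $u$ from the corrections (for example by replacing $p(w)$ with $p(w) \symdiff p(u)$ whenever $u \in p(w)$, exploiting $u \in p(u)$ from P8), and then re-verify the flow axioms by hand; invoking the focused-flow lemma avoids this bookkeeping entirely.
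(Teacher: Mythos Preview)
The paper does not actually prove this lemma: it is stated with a citation to \cite[Lemma~D.6]{Simmons_2021} and no argument is given here. So there is no ``paper's own proof'' to compare against.

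Your proposal is correct as a self-contained argument. The key step---invoking the focused Pauli flow lemma so that a $Z$-labelled vertex $u$ cannot appear in any $p(w)$ for $w\neq u$---is exactly the right move and makes the rest routine. The identity $\odds{G'}{A}=\odds{G}{A}\cap V'$ for $A\subseteq V'$ is correct, and conditions P1--P9 then transfer as you describe, since for every $w\in V'\setminus O$ both $p'(w)=p(w)$ and $\odds{G'}{p'(w)}=\odds{G}{p(w)}\setminus\{u\}$ agree with the originals on all vertices still present. One small point you leave implicit: the deleted vertex $u$ is a non-output (it carries a measurement label) and, as drawn in the lemma, is not an input, so $I$ and $O$ are unchanged in $\Gamma'$; your definition of $\Gamma'$ already reflects this.

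Your closing remark is also apt: without focusing one would symmetric-difference away $u$ from each $p(w)$ using $p(u)$ and then re-verify, which is essentially how the original proof in \cite{Simmons_2021} proceeds; going via the focused-flow lemma is a clean shortcut.
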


\begin{lemma}[{\cite[Proposition 4.1]{McElvanney_2022}}]\label{lem:Z-insert}
	Inserting a $Z$-measured vertex (i.e.\ the inverse of $Z$-deletion) also preserves the existence of Pauli flow.
\end{lemma}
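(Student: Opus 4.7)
Let $\Gamma' = (G', I, O, \lambda')$ arise from $\Gamma = (G, I, O, \lambda)$ by $Z$-inserting a fresh vertex $w$ with some chosen neighbour set $N \subseteq V$, so $\lambda'(w) = Z$, $\lambda'$ agrees with $\lambda$ on $V$, and $w \in V' \setminus (I \cup O)$. Given a Pauli flow $(p, \prec)$ on $\Gamma$, the plan is to construct a Pauli flow $(p', \prec')$ on $\Gamma'$ explicitly: set $p'(w) = \{w\}$ (forced by condition P8 since $\lambda'(w) = Z$), $p'(u) = p(u)$ for $u \in V \setminus O$ with $u \neq w$, and take $\prec'$ to be $\prec$ extended by adjoining $w$ as a new minimum element of the order.

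The verification rests on two simple observations. First, $w$ does not appear in any of the original correction sets, so $p'(u) \subseteq V' \setminus I$ for every $u$. Second, for any $A \subseteq V \setminus I$ and any $v \in V$, the equality $|N_{G'}(v) \cap A| = |N_G(v) \cap A|$ holds because the only new edges of $G'$ touch $w \notin A$; hence $\odds{G'}{A} \cap V = \odds{G}{A}$, while separately $w \in \odds{G'}{A}$ iff $|N \cap A|$ is odd. With these in hand, conditions P4--P9 collapse: for $u = w$ only P8 applies and holds since $w \in p'(w)$; for $u \neq w$ the data determining P4--P9 is unchanged from $(p, \prec)$. Condition P1 is vacuous at $u = w$ and, for $u \neq w$, follows from the original P1 since $w \notin p'(u)$.

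The remaining work concerns P2 and P3. For P2 at $u = w$ we have $\odds{G'}{\{w\}} = N$, so every $v \in N$ with $\lambda'(v) \notin \{Y, Z\}$ must satisfy $w \prec' v$; placing $w$ as the minimum of $\prec'$ delivers this immediately. For $u \neq w$, the only element possibly added to $\odds{G'}{p(u)}$ is $w$ itself, but $\lambda'(w) = Z \in \{Y, Z\}$ imposes no constraint via P2. Condition P3 is vacuous when $u = w$ (no $v$ satisfies $\lnot(w \prec' v)$) and when $v = w$ (since $\lambda'(w) = Z \neq Y$), and otherwise reduces directly to the original P3. The main conceptual step, and the only place where the construction could fail, is ensuring $w$ precedes its relevant neighbours in $\prec'$; placing $w$ as a global minimum is a simple sufficient choice, since a $Z$-measurement at $w$ imposes no upstream ordering constraints (as $Z \in \{Y, Z\}$ makes the conditions at $v = w$ vacuous everywhere else).
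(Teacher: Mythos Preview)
Your proof is correct. The paper does not actually prove this lemma itself---it simply cites \cite[Proposition~4.1]{McElvanney_2022} as a known result in the preliminaries section---so there is no in-paper argument to compare against. Your direct construction (setting $p'(w)=\{w\}$, leaving all other correction sets unchanged, and adjoining $w$ as a global minimum of the partial order) is the natural one and the verification of the nine Pauli-flow conditions is complete and accurate. The key observations you isolate---that $\odds{G'}{A}\cap V=\odds{G}{A}$ for $A\subseteq V$, and that $\lambda'(w)=Z$ renders the constraints at $v=w$ vacuous in P2 and P3---are exactly what makes the argument go through.
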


Other rewrite rules are based around quantum generalisations of two graph-theoretic operations.

\begin{definition}
Let $G=(V,E)$ be a graph and $u\in V$. The \emph{local complementation of $G$ about $u$} is the operation which maps $G$ to $G\star u := (V, E\symdiff \{(b,c) \mid (b,u),(c,u) \in E \text{ and } b\neq c\})$, where $\symdiff$ is the symmetric difference operator given by $A\symdiff B = (A\cup B)\setminus(A\cap B)$.
 The \emph{pivot of $G$ about the edge $(u,v)$} is the operation mapping $G$ to the graph $G \wedge uv := G\star u \star v \star u$.
\end{definition}

Local complementation keeps the vertices of the graph the same but toggles some edges: for each pair of neighbours of $u$, i.e. $v, v' \in N_G(u)$, there is an edge connecting $v$ and $v'$ in $G\star u$ if and only if there is no edge connecting $v$ and $v'$ in $G$.
Pivoting is a series of three local complementations about two neighbouring vertices, and is denoted by $G\wedge uv= G \star u \star v \star u$.

Both local complementation and pivoting give rise to operations on MBQC-form diagrams which preserve the MBQC form as well as the existence of Pauli flow (after some simple merging of single-qubit Cliffords into measurement effects, cf.~\cite[Section~4.2]{Backens_2021}).
We illustrate the operations with examples as they are difficult to express in ZX-calculus in generality.

\begin{lemma}[{\cite[Lemma D.12]{Simmons_2021}}]
 \label{lem:local-complementation}
 A local complementation about a vertex $u$ preserves the existence of Pauli flow.
 \begin{center}
      \begin{tikzpicture}[xscale=0.4,yscale=0.3]
	\begin{pgfonlayer}{nodelayer}
		\node [style=Z dot] (0) at (2, 3) {};
		\node [style=Z dot] (1) at (-1, 1) {};
		\node [style=Z dot] (2) at (1, -1) {};
		\node [style=Z dot] (3) at (-1, -3) {};
		\node [style=none] (4) at (4, 3) {};
		\node [style=none] (5) at (4, 1) {};
		\node [style=none] (6) at (4, -1) {};
		\node [style=none] (7) at (4, -3) {};
		\node [style=none] (8) at (1, -1.9) {$\vdots$};
		\node [style=none] (9) at (1, 3) {u};
		\node [style=none] (10) at (5, 0) {=};
	\end{pgfonlayer}
	\begin{pgfonlayer}{edgelayer}
		\draw [style=Hadamard edge] (0) to (3);
		\draw [style=Hadamard edge] (1) to (3);
		\draw [style=Hadamard edge] (1) to (0);
		\draw [style=Hadamard edge] (0) to (2);
		\draw (0) to (4.center);
		\draw (1) to (5.center);
		\draw (2) to (6.center);
		\draw (3) to (7.center);
	\end{pgfonlayer}
\end{tikzpicture}  \begin{tikzpicture}[xscale=0.4,yscale=0.3]
	\begin{pgfonlayer}{nodelayer}
		\node [style=Z dot] (0) at (0, 3) {};
		\node [style=Z dot] (1) at (-3, 1) {};
		\node [style=Z dot] (2) at (-1, -1) {};
		\node [style=Z dot] (3) at (-3, -3) {};
		\node [style=none] (4) at (4, 3) {};
		\node [style=none] (5) at (4, 1) {};
		\node [style=none] (6) at (4, -1) {};
		\node [style=none] (7) at (4, -3) {};
		\node [style=none] (8) at (-1, -1.9) {$\vdots$};
		\node [style=X phase dot] (9) at (2, 3) {$-\frac{\pi}{2}$};
		\node [style=Z phase dot] (10) at (2, 1) {$\frac{\pi}{2}$};
		\node [style=Z phase dot] (11) at (2, -1) {$\frac{\pi}{2}$};
		\node [style=Z phase dot] (12) at (2, -3) {$\frac{\pi}{2}$};
		\node [style=none] (15) at (-1, 3) {u};
	\end{pgfonlayer}
	\begin{pgfonlayer}{edgelayer}
		\draw [style=Hadamard edge] (0) to (3);
		\draw [style=Hadamard edge] (1) to (0);
		\draw [style=Hadamard edge] (0) to (2);
		\draw (0) to (9);
		\draw (9) to (4.center);
		\draw (1) to (10);
		\draw (10) to (5.center);
		\draw (2) to (11);
		\draw (11) to (6.center);
		\draw (3) to (12);
		\draw (12) to (7.center);
		\draw [style=Hadamard edge] (1) to (2);
		\draw [style=Hadamard edge] (2) to (3);
	\end{pgfonlayer}
\end{tikzpicture}
 \end{center}
\end{lemma}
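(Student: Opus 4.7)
The plan is to start from a Pauli flow $(p, \prec)$ on $(G, I, O, \lambda)$, which we may assume is focused by the focused-flow existence lemma just stated, and to construct an explicit Pauli flow on the modified labelled open graph $(G \star u, I, O, \lambda')$. Here $\lambda'$ is obtained from $\lambda$ by absorbing the Clifford corrections ($X_{-\pi/2}$ on $u$'s output and $Z_{\pi/2}$ on each neighbour's output) into the respective measurement effects. Each absorption changes a measurement label in a predictable way: for instance, an $\XYm$ measurement at angle $\alpha$ combined with a $Z_{\pi/2}$ correction becomes an $\XYm$ measurement at angle $\alpha - \tfrac{\pi}{2}$, a $Y$ measurement becomes $Z$, an $X$ measurement is unchanged, and similarly for the XZ and YZ planes; the $X_{-\pi/2}$ at $u$ itself permutes its plane label (or simply stays as a correction on the output wire if $u \in O$).

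Next, I would define the candidate flow $(p', \prec')$ on $(G \star u, I, O, \lambda')$ by taking $\prec' = \prec$ and letting $p'(v)$ be $p(v)$ symmetric-differenced with a correction set built from $\{u\}$, $N_G(u)$, and $p(u)$, chosen according to whether $u \in p(v)$ and whether $u \in \odd{p(v)}$. The focused-flow hypothesis forces elements of $p(v)$ to have measurement labels in $\{\XYm, X, Y\}$ and elements of $\odd{p(v)}$ to have labels in $\{\XZm, \YZm, Y, Z\}$, which dramatically narrows the subcases. The update rule is guided by the parity identity
\[
\big|N_{G\star u}(v) \cap A\big| \;\equiv\; \big|N_G(v) \cap A\big| + [v \in N_G(u)]\cdot\big(|N_G(u) \cap A| + [v \in A]\big) \pmod{2},
\]
which tracks how odd neighbourhoods change under local complementation; note that this formula leaves odd neighbourhoods unchanged at vertices $v \notin N_G(u)$ and at $v = u$, and modifies them at $v \in N_G(u)$ only when $u \in \odds{G}{A}$ and $v \notin A$ have matching parities.

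The final step is to verify all nine Pauli flow conditions for $(p', \prec')$ on the new labelled open graph. This proceeds by case analysis on $\lambda(u)$ (including the case $u \in O$), and within each case on the label $\lambda(v)$ of the vertex whose condition is being checked and on whether $v \in N_G(u)$. The main obstacle will be the combinatorial bookkeeping: each of the nine conditions must be reduced---via the parity identity above and the relabelling from $\lambda$ to $\lambda'$---to the corresponding condition for the original focused flow $(p, \prec)$, and the \emph{same} choice of correction set in $p'(v)$ must work simultaneously for all conditions at $v$. The focused-flow assumption keeps this tractable by constraining which measurement labels appear inside flow sets and odd neighbourhoods; once the case analysis is completed, $(p', \prec')$ is a Pauli flow on $(G \star u, I, O, \lambda')$, establishing the lemma.
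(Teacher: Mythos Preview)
The paper does not actually prove this lemma: it is stated in the preliminaries (Section~\ref{s:existing-rules}) as a known result, cited from \cite[Lemma~D.12]{Simmons_2021}, with no accompanying proof in this paper. So there is no in-paper argument to compare your proposal against.

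Your outlined strategy---start from a focused Pauli flow, absorb the local Cliffords into the measurement labels, modify the correction sets by symmetric differences guided by the odd-neighbourhood parity identity, and verify the nine conditions case-by-case---is the standard direct-construction approach and is essentially how the cited reference proceeds. Your parity formula for $\abs{N_{G\star u}(v)\cap A}$ is correct, and invoking the focused-flow restrictions on which labels can occur in $p(v)$ and in $\odd{p(v)}$ is exactly the right leverage to keep the case analysis manageable.

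The one caveat is that your proposal remains a plan rather than a proof: you describe $p'(v)$ only as ``$p(v)$ symmetric-differenced with a correction set built from $\{u\}$, $N_G(u)$, and $p(u)$, chosen according to\ldots'' without pinning it down. The substantive content of the lemma lies precisely in writing down a single explicit rule for $p'$ that works uniformly---in particular handling the cases where $u$ or a neighbour carries label $Y$ (where focusing couples membership in $p(v)$ and in $\odd{p(v)}$) and where $u\in O$---and then checking that the \emph{same} choice satisfies all nine conditions simultaneously. Nothing you have written is wrong, but that missing concrete definition and the accompanying verification are where the actual work is.
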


\begin{lemma}[{\cite[Lemma D.21]{Simmons_2021}}]
 \label{lem:pivot}
 A pivot about an edge $(u,v)$ preserves the existence of Pauli flow.
\begin{center}
    \begin{tikzpicture}[scale=0.4]
	\begin{pgfonlayer}{nodelayer}
		\node [style=Z dot] (0) at (2, 3) {};
		\node [style=Z dot] (1) at (2, -3) {};
		\node [style=Z dot] (2) at (0, -1) {};
		\node [style=Z dot] (3) at (0, 1) {};
		\node [style=Z dot] (4) at (-2.5, 3.25) {};
		\node [style=Z dot] (5) at (-4, 2) {};
		\node [style=Z dot] (6) at (-4, -2) {};
		\node [style=Z dot] (7) at (-2.5, -3.25) {};
		\node [style=none] (8) at (-4.5, -2.5) {};
		\node [style=none] (9) at (-3, -3.75) {};
		\node [style=none] (10) at (-4.5, 2.5) {};
		\node [style=none] (11) at (-3, 3.75) {};
		\node [style=none] (12) at (5, 1) {};
		\node [style=none] (13) at (5, -1) {};
		\node [style=none] (14) at (5, 3) {};
		\node [style=none] (15) at (5, -3.25) {};
		\node [style=none] (16) at (6, 0) {=};
		\node [style=none] (17) at (2.5, 3.5) {$v$};
		\node [style=none] (18) at (2.5, -3.5) {$u$};
	\end{pgfonlayer}
	\begin{pgfonlayer}{edgelayer}
		\draw [style=Hadamard edge] (5) to (7);
		\draw [style=Hadamard edge] (6) to (4);
		\draw [style=Hadamard edge] (4) to (0);
		\draw [style=Hadamard edge] (5) to (0);
		\draw [style=Hadamard edge] (7) to (1);
		\draw [style=Hadamard edge] (6) to (1);
		\draw [style=Hadamard edge] (2) to (1);
		\draw [style=Hadamard edge] (2) to (0);
		\draw [style=Hadamard edge] (3) to (1);
		\draw [style=Hadamard edge] (0) to (1);
		\draw [style=Hadamard edge] (3) to (0);
		\draw [style=Hadamard edge] (2) to (4);
		\draw [style=Hadamard edge] (3) to (5);
		\draw [style=Hadamard edge] (3) to (6);
		\draw [style=Hadamard edge] (2) to (7);
		\draw (10.center) to (5);
		\draw (11.center) to (4);
		\draw (6) to (8.center);
		\draw (7) to (9.center);
		\draw (2) to (13.center);
		\draw (3) to (12.center);
		\draw [bend right=13] (0) to (15.center);
		\draw [bend left=13] (1) to (14.center);
	\end{pgfonlayer}
\end{tikzpicture} \begin{tikzpicture}[scale=0.4]
	\begin{pgfonlayer}{nodelayer}
		\node [style=Z dot] (0) at (2, 3) {};
		\node [style=Z dot] (1) at (2, -3) {};
		\node [style=Z dot] (2) at (0, -1) {};
		\node [style=Z dot] (3) at (0, 1) {};
		\node [style=Z dot] (4) at (-2.5, 3.25) {};
		\node [style=Z dot] (5) at (-4, 2) {};
		\node [style=Z dot] (6) at (-4, -2) {};
		\node [style=Z dot] (7) at (-2.5, -3.25) {};
		\node [style=none] (8) at (-4.5, -2.5) {};
		\node [style=none] (9) at (-3, -3.75) {};
		\node [style=none] (10) at (-4.5, 2.5) {};
		\node [style=none] (11) at (-3, 3.75) {};
		\node [style=none] (12) at (5, 1) {};
		\node [style=none] (13) at (5, -1) {};
		\node [style=none] (14) at (5, 3) {};
		\node [style=H gate] (16) at (3.5, 3) {};
		\node [style=H gate] (17) at (3.5, -3) {};
		\node [style=none] (18) at (2.5, 3.5) {$u$};
		\node [style=none] (19) at (2.5, -3.5) {$v$};
		\node [style=Z phase dot] (20) at (3, 1) {$\pi$};
		\node [style=Z phase dot] (21) at (3, -1) {$\pi$};
		\node [style=none] (22) at (5, -3) {};
	\end{pgfonlayer}
	\begin{pgfonlayer}{edgelayer}
		\draw [style=Hadamard edge] (4) to (0);
		\draw [style=Hadamard edge] (5) to (0);
		\draw [style=Hadamard edge] (7) to (1);
		\draw [style=Hadamard edge] (6) to (1);
		\draw [style=Hadamard edge] (2) to (1);
		\draw [style=Hadamard edge] (2) to (0);
		\draw [style=Hadamard edge] (3) to (1);
		\draw [style=Hadamard edge] (0) to (1);
		\draw [style=Hadamard edge] (3) to (0);
		\draw (10.center) to (5);
		\draw (11.center) to (4);
		\draw (6) to (8.center);
		\draw (7) to (9.center);
		\draw (1) to (17);
		\draw (0) to (16);
		\draw (16) to (14.center);
		\draw [style=Hadamard edge] (5) to (6);
		\draw [style=Hadamard edge] (4) to (7);
		\draw [style=Hadamard edge] (6) to (2);
		\draw [style=Hadamard edge] (7) to (3);
		\draw [style=Hadamard edge] (5) to (2);
		\draw [style=Hadamard edge] (4) to (3);
		\draw (3) to (20);
		\draw (20) to (12.center);
		\draw (2) to (21);
		\draw (21) to (13.center);
		\draw (17) to (22.center);
	\end{pgfonlayer}
\end{tikzpicture}
\end{center}
\end{lemma}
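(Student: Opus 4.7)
The approach is to reduce pivoting to three consecutive local complementations and apply Lemma~\ref{lem:local-complementation} at each step. By definition, $G \wedge uv = G \star u \star v \star u$, so a pivot about the edge $(u,v)$ is precisely the composition of three local complementations: first about $u$, then $v$, then $u$ again.

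Starting from the labelled open graph $\Gamma = (G, I, O, \lambda)$ with Pauli flow, I would first invoke Lemma~\ref{lem:local-complementation} to obtain a labelled open graph $\Gamma_1$ (with underlying graph $G \star u$) that still has Pauli flow and whose labels are modified as described by the local-complementation rule. Applying the lemma again at $v$ to $\Gamma_1$ produces $\Gamma_2$ with Pauli flow, and once more at $u$ yields the pivoted labelled open graph $\Gamma'$ with Pauli flow. Since each step preserves the existence of Pauli flow, so does the composition.

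The main task is then to verify that the cumulative label transformations produced by the three local complementations reproduce the labels shown on the right-hand side of the pivot ZX-diagram. Each common neighbour of $u$ and $v$ accumulates two $\frac{\pi}{2}$ phases (from the two local complementations about $u$) and one $-\frac{\pi}{2}$ phase (from the local complementation about $v$), together contributing the $\pi$ phase shown; vertices in the symmetric difference $N_G(u) \symdiff N_G(v)$ receive a single Clifford that can be absorbed into their existing measurement; and the vertices $u$ and $v$ exchange roles with Hadamards appearing on their output edges. Because Lemma~\ref{lem:local-complementation} already guarantees Pauli flow preservation together with the appropriate label transformation at every step, the pivot lemma follows from the reduction without needing to re-verify the full ZX-diagram equality independently; the explicit flow for $\Gamma'$ can be read off by composing the three flow transformations provided by the local-complementation proof.
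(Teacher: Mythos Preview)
The paper does not actually prove this lemma: it is stated in Section~\ref{s:existing-rules} as an existing result and attributed to \cite[Lemma~D.21]{Simmons_2021}, with no proof given here. So there is no in-paper argument to compare against.

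Your reduction is nonetheless the standard one and is correct for the purpose of the lemma: since $G\wedge uv = G\star u\star v\star u$, three applications of Lemma~\ref{lem:local-complementation} immediately give Pauli-flow preservation for the pivot, and that is all the statement actually asserts.

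Your bookkeeping of the local Clifford phases, however, is off. In the local-complementation rule each neighbour receives a $Z_{+\pi/2}$ (not $-\pi/2$), and after $G\star u$ the common neighbours of $u$ and $v$ are \emph{no longer} neighbours of $v$, so the second local complementation contributes nothing to them; their $Z_\pi$ comes from the two complementations about $u$, not from ``two $\pi/2$ plus one $-\pi/2$''. In fact a straightforward tally shows that \emph{every} vertex in $N_G(u)\cup N_G(v)\setminus\{u,v\}$ picks up a total $Z_\pi$ from the three steps, not only the common neighbours; reducing this to the form shown in the diagram (with $Z_\pi$ only on $N_G(u)\cap N_G(v)$ and Hadamards on $u,v$) requires an additional use of the graph-state stabiliser relations at $u$ and $v$, which absorbs the remaining $Z_\pi$'s into the Cliffords on those two vertices. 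None of this affects the Pauli-flow claim, which follows already from the three-fold application of Lemma~\ref{lem:local-complementation}, but the label/phase justification as written would not go through.
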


\begin{observation} \label{obs:inverses}
	Lemmas~\ref{lem:local-complementation} and \ref{lem:pivot} provide their own inverses: four successive local complementations about the same vertex or two successive pivots about the same edge leave a diagram invariant.
\end{observation}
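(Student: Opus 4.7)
My plan is to handle the two cases separately, each time decomposing the rewrite into its graph-theoretic part and the additional phase/Hadamard decorations it introduces, then computing the period of each under repeated application.

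For four local complementations at $u$: as a graph operation, $\star u$ toggles the edges between pairs of neighbours of $u$, so $\star u\star u = \mathrm{id}$ and in particular $G \star u \star u \star u \star u = G$. The rewrite of Lemma~\ref{lem:local-complementation} additionally inserts an $X_{-\pi/2}$ on the output of $u$ and a $Z_{\pi/2}$ on the output of each neighbour of $u$; since $\star u$ does not change $N_G(u)$, applying the rewrite $k$ times decorates the same output wires with the $k$-th powers of these single-qubit gates. For $k=4$ we get $X_{-2\pi} = I = Z_{2\pi}$ (and the underlying graph is restored), so the diagram is unchanged.

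For two pivots about $(u,v)$: using $G\wedge uv = G\star u\star v\star u$ and involutivity of each $\star w$, we compute
\[
(G\wedge uv)\wedge uv = G\star u\star v\star u\star u\star v\star u = G.
\]
The rewrite of Lemma~\ref{lem:pivot} moreover swaps the labels of $u$ and $v$, places Hadamards on their output wires, and puts $\pi$-phases on an appropriate subset of $N_G(u)\cup N_G(v)$. Applying the rewrite a second time re-swaps the labels, composes the output Hadamards to $H^2 = I$, and doubles the $\pi$-phases to $2\pi = I$.

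The main point requiring care is verifying that on the second pivot application the extra phases land on exactly the same output wires as on the first, rather than on a different subset. This holds because the subset receiving a $\pi$-phase is determined only by symmetric data of $u$ and $v$ (their neighbourhoods and common neighbours in $G$), so swapping the labels leaves it invariant. Once this symmetry is noted, the cancellations of both the Hadamards and the $\pi$-phases are immediate, and together with the graph-theoretic identities above they yield the observation.
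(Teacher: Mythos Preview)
The paper states this observation without proof, so there is no authorial argument to compare against; your write-up supplies the missing details. The local-complementation half is entirely sound: $\star u$ is an involution on graphs that fixes $N_G(u)$, so the four $X_{-\pi/2}$ and $Z_{\pi/2}$ decorations accumulate on the same output wires and cancel as $X_{-2\pi}=Z_{2\pi}=I$.

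In the pivot half, however, your justification for why the $\pi$-phases land on the same wires the second time is incomplete. You argue that the subset receiving a $\pi$-phase is ``symmetric data of $u$ and $v$'' so swapping labels leaves it invariant---but the second pivot is applied to the \emph{new} graph $G\wedge uv$, not merely to relabelled $u,v$ in the old graph. Symmetry of $N(u)\cap N(v)$ in $u$ and $v$ does not by itself give $N_{G\wedge uv}(u)\cap N_{G\wedge uv}(v) = N_G(u)\cap N_G(v)$; that equality is a property of the pivot, not of the label swap. It does hold, and is the missing ingredient: with $A = N_G(u)\setminus(N_G(v)\cup\{v\})$, $B = N_G(v)\setminus(N_G(u)\cup\{u\})$, $C = N_G(u)\cap N_G(v)$, tracking neighbourhoods through $G\star u\star v\star u$ gives $N_{G\wedge uv}(u) = \{v\}\cup B\cup C$ and $N_{G\wedge uv}(v) = \{u\}\cup A\cup C$, so the common neighbours are again exactly $C$. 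Once you add this computation, the $\pi$-phases double up and cancel and your argument goes through.
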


\section{Converting planar measurements to XY-measurements}
In the \emph{graph-like diagrams} \cite{Duncan_2020} used in PyZX, all spiders are green and all edges are Hadamard edges. `Phase gadgets' consist of a degree-1 green spider connected to a phase-free green spider by a Hadamard edge as in the left-most diagram of \eqref{eq:graph-like-phase-gadget}.
When converting graph-like diagrams to MBQC-form, it is difficult to know whether to interpret phase gadgets as a single $\YZm$-measured vertex (middle diagram) or as an $X$-measured vertex connected to a degree-1 $\XYm$-measured vertex (right-most diagram):
\begin{equation}\label{eq:graph-like-phase-gadget}
 \begin{tikzpicture}[baseline=-0.1cm]
	\begin{pgfonlayer}{nodelayer}
		\node [style=Z dot] (0) at (0, 0) {};
		\node [style=Z phase dot] (1) at (0.75, 0) {$\alpha$};
		\node [style=none] (2) at (-1, 0.5) {};
		\node [style=none] (3) at (-1, -0.25) {};
		\node [style=none] (4) at (-1, 0.25) {$\vdots$};
	\end{pgfonlayer}
	\begin{pgfonlayer}{edgelayer}
		\draw [style=hadamard edge] (1) to (0);
		\draw (0) to (2.center);
		\draw (0) to (3.center);
	\end{pgfonlayer}
\end{tikzpicture} \qquad\qquad\qquad \begin{tikzpicture}[baseline=-0.1cm]
	\begin{pgfonlayer}{nodelayer}
		\node [style=Z dot] (0) at (0, 0) {};
		\node [style=X phase dot] (1) at (0, 0.5) {$\alpha$};
		\node [style=none] (2) at (-1, 0.5) {};
		\node [style=none] (3) at (-1, -0.25) {};
		\node [style=none] (4) at (-1, 0.25) {$\vdots$};
	\end{pgfonlayer}
	\begin{pgfonlayer}{edgelayer}
		\draw (1) to (0);
		\draw (0) to (2.center);
		\draw (0) to (3.center);
	\end{pgfonlayer}
\end{tikzpicture} \qquad\qquad\qquad \begin{tikzpicture}[baseline=-0.1cm]
	\begin{pgfonlayer}{nodelayer}
		\node [style=Z dot] (0) at (0, 0) {};
		\node [style=Z phase dot] (1) at (0.75, 0.5) {$\alpha$};
		\node [style=none] (2) at (-1, 0.5) {};
		\node [style=none] (3) at (-1, -0.25) {};
		\node [style=none] (4) at (-1, 0.25) {$\vdots$};
		\node [style=Z dot] (5) at (0, 0.5) {};
		\node [style=Z dot] (6) at (0.75, 0) {};
	\end{pgfonlayer}
	\begin{pgfonlayer}{edgelayer}
		\draw (0) to (2.center);
		\draw (0) to (3.center);
		\draw [style=hadamard edge] (0) to (6);
		\draw (5) to (0);
		\draw (1) to (6);
	\end{pgfonlayer}
\end{tikzpicture}
\end{equation}
The following proposition shows that both interpretations are valid and can be interconverted.

\begin{proposition}\label{prop:YZtoXY}
	Let $(G,I,O,\ld)$ be a labelled open graph with Pauli flow where $G=(V,E)$, and suppose there exists $x\in V$ with $\ld(x) = \YZm$.
	Then $(G',I,O,\ld')$ has Pauli flow, where $V'=V\cup\{x'\}$, $E'=E\cup\{\{x,x'\}\}$, and $\ld'(x)=X$, $\ld'(x')=\XYm$, and $\ld'(v)=\ld(v)$ otherwise.
\end{proposition}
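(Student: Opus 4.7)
The plan is to explicitly construct a Pauli flow on $(G',I,O,\ld')$ starting from a focused Pauli flow on $(G,I,O,\ld)$, whose existence is guaranteed by the focused flow lemma.

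The key idea is that $x'$ should play exactly the role that $x$ played in the original flow. Concretely, I set $p'(v)=p(v)$ for $v\in V\setminus\{x\}$, $p'(x)=\{x'\}$, and $p'(x')=p(x)$, and extend $\prec$ to a partial order $\prec'$ on $V\cup\{x'\}$ by placing $x'$ immediately after $x$: declare $x\prec'x'$, and $x'\prec'v$ iff $x\prec v$ for $v\in V$. That $\prec'$ is indeed a partial order follows because the map that fixes $V$ and sends $x'$ to $x$ would collapse any $\prec'$-cycle to a $\prec$-cycle.

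The verification then proceeds by checking Definition~\ref{def:Pauli-flow} in three cases ($u=x'$, $u=x$, and $u\in V\setminus\{x\}$), using two consequences of focusedness. First, since $\ld(x)=\YZm\notin\{\XYm,X,Y\}$, I obtain $x\notin p(u)$ for every $u\neq x$; hence the new edge $\{x,x'\}$ leaves $\odds{G'}{p(u)}=\odds{G}{p(u)}$ for $u\in V\setminus\{x\}$, while $\odds{G'}{p(x)}=\odds{G}{p(x)}\cup\{x'\}$. Second, focused flow restricts the labels of non-output vertices in $p(x)\setminus\{x\}$ to $\{\XYm,X,Y\}$ and of those in $\odds{G}{p(x)}\setminus\{x\}$ to $\{\XZm,\YZm,Y,Z\}$.

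The main obstacle I anticipate is showing that the ordering constraints imposed by $x'$'s flow conditions (P1 and P2 applied to $x'$) can all be accommodated by placing $x'$ right after $x$. This is where the choice $p'(x')=p(x)$ pays off: combined with the label restrictions above, the vertices forcing $x'\prec'v$ are precisely those that forced $x\prec v$ in the original flow, so the required orderings are inherited automatically. The remaining checks are routine: P4 for $x'$ follows from $x'\in\odds{G'}{p(x)}$ and $x'\notin p(x)$; P7 for $x$ from $\odds{G'}{\{x'\}}=\{x\}$, with P1 for $x$ secured by $x\prec'x'$; P2 for $u\in V\setminus\{x\}$ with $v=x$ reproduces the original requirement $u\prec x$ because both $X$ and $\YZm$ lie outside $\{Y,Z\}$; and P3 reduces either vacuously or to the third bullet of focused flow.
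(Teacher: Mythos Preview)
Your proof is correct but takes a different route from the paper. The paper proves this proposition in two lines by composing previously established flow-preserving rewrites: first apply $Z$-insertion (Lemma~\ref{lem:Z-insert}) to attach a new $Z$-measured vertex $x'$ to $x$, then pivot along the edge $(x,x')$ (Lemma~\ref{lem:pivot}). Both operations are already known to preserve Pauli flow, so the composite does too, and the resulting labels and edge set match the statement.

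By contrast, you build the new Pauli flow explicitly from a focused Pauli flow on the original graph, setting $p'(x)=\{x'\}$, $p'(x')=p(x)$, and $p'(u)=p(u)$ otherwise, then verifying the nine conditions of Definition~\ref{def:Pauli-flow} directly. The key facts you use---that focusedness forces $x\notin p(u)$ for $u\neq x$ (so the new edge does not disturb existing odd neighbourhoods), and that $x\in p(x)$ (so $x'\in\odds{G'}{p(x)}$)---are exactly right, and your handling of the label change at $x$ in condition~\ref{P2} (both $X$ and $\YZm$ lie outside $\{Y,Z\}$, so the constraint is inherited) is the one subtle point and you treat it correctly.

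What each approach buys: the paper's argument is shorter and modular, but it leans on the non-trivial pivoting lemma (whose own flow-preservation proof is substantial). Your argument is self-contained and more informative---it exhibits the new flow concretely and makes transparent why $x'$ simply inherits $x$'s role in the original flow---at the cost of a longer case check. Your approach would also generalise more readily if one wanted finer control over the resulting partial order.
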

\begin{proof}
	Consider the following sequence of rewrites.
	
	\begin{center}
		\begin{tikzpicture}[scale=0.7]
	\begin{pgfonlayer}{nodelayer}
		\node [style=Z dot] (0) at (-1.75, 0) {};
		\node [style=X phase dot] (1) at (-1, 0) {$\alpha$};
		\node [style=none] (2) at (-2.75, 0.5) {};
		\node [style=none] (3) at (-2.75, -0.5) {};
		\node [style=none] (4) at (-2.75, 0) {$\vdots$};
		\node [style=Z dot] (14) at (2, 0.5) {};
		\node [style=X phase dot] (15) at (3, 0.5) {$\alpha$};
		\node [style=none] (16) at (1, 1) {};
		\node [style=none] (17) at (1, 0) {};
		\node [style=none] (18) at (1, 0.5) {$\vdots$};
		\node [style=Z dot] (19) at (2, -0.5) {};
		\node [style=X dot] (20) at (3, -0.5) {};
		\node [style=Z dot] (21) at (5.5, 0.5) {};
		\node [style=none] (23) at (4.5, 1) {};
		\node [style=none] (24) at (4.5, 0) {};
		\node [style=none] (25) at (4.5, 0.5) {$\vdots$};
		\node [style=Z dot] (26) at (5.5, -0.5) {};
		\node [style=none] (29) at (0, 0) {$=$};
		\node [style=none] (30) at (3.75, 0) {$=$};
		\node [style=Z dot] (31) at (6.5, 0.5) {};
		\node [style=Z phase dot] (32) at (6.5, -0.5) {$\alpha$};
		\node [style=none] (33) at (-1.75, 0.5) {$x$};
		\node [style=none] (34) at (2, 1) {$x$};
		\node [style=none] (35) at (2, -1) {$x'$};
		\node [style=none] (36) at (5.5, 1) {$x'$};
		\node [style=none] (37) at (5.5, -1) {$x$};
	\end{pgfonlayer}
	\begin{pgfonlayer}{edgelayer}
		\draw (1) to (0);
		\draw (0) to (2.center);
		\draw (0) to (3.center);
		\draw (15) to (14);
		\draw (14) to (16.center);
		\draw (14) to (17.center);
		\draw [style=hadamard edge] (14) to (19);
		\draw (20) to (19);
		\draw (21) to (23.center);
		\draw (21) to (24.center);
		\draw [style=hadamard edge] (21) to (26);
		\draw (31) to (21);
		\draw (32) to (26);
	\end{pgfonlayer}
\end{tikzpicture}
	\end{center}
Here we insert the $Z$-measured vertex $x'$ connected only to $x$, then pivot along the edge $(x,x')$. Both $Z$-insertion and pivoting preserve the existence of Pauli flow, thus our new rewrite rule also preserves the existence of Pauli flow.
\end{proof}

A similar sequence of rewrites allows us to rewrite XZ-measurements in terms of just $Y$-measurements and XY-measurements

\begin{proposition}\label{prop:XZtoXY}
	Let $(G,I,O,\ld)$ be a labelled open graph with Pauli flow where $G=(V,E)$, and suppose there exists $x\in V$ with $\ld(x) = \XZm$.
	Then $(G',I,O,\ld')$ has Pauli flow, where $V'=V\cup\{x'\}$, $E'=E\cup\{\{x,x'\}\}$, and $\ld'(x)=Y$, $\ld'(x')=\XYm$ and $\ld'(v)=\ld(v)$ otherwise.
\end{proposition}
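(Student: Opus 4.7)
The plan is to mimic the proof of Proposition~\ref{prop:YZtoXY}, extending its two-step sequence (Z-insertion followed by a pivot) with one additional local complementation at the end. Concretely, starting from the MBQC-form diagram for the XZ-measured vertex $x$ (a graph spider $Z_x$ with attached measurement effect given by a $Z_{\pi/2}$-spider followed by an $X_\alpha$-spider, as in Table~\ref{tab:MBQC-ZX}), I would first insert a $Z$-measured vertex $y$ of degree one connected to $x$ by a Hadamard edge, which preserves Pauli flow by Lemma~\ref{lem:Z-insert}. Next, I would pivot along the new edge $(x,y)$ using Lemma~\ref{lem:pivot}: this transports the original neighbours of $x$ onto $y$, leaves $x$ connected only to $y$, and prepends a Hadamard to each of their measurement effects. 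Finally, I would locally complement about $x$ via Lemma~\ref{lem:local-complementation}; since $x$ is now degree one, no new edges are added, but $X_{-\pi/2}$ is prepended to $x$'s effect and $Z_{\pi/2}$ to $y$'s effect.

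All three operations preserve Pauli flow, so it only remains to check that the resulting measurement effects reduce to the standard ones from Table~\ref{tab:MBQC-ZX}. For $y$, the effect becomes a $Z_{\pi/2}$-spider composed with an $X$-dot; this equals $\sqrt{2}\bra{{-}_Y}$ up to a non-zero scalar and is therefore a Pauli Y-measurement. For $x$, the composite effect $X_{-\pi/2} \cdot H \cdot Z_{\pi/2} \cdot X_\alpha$ simplifies by using the commutation $X_\gamma \cdot H = H \cdot Z_\gamma$ to push $X_{-\pi/2}$ through the Hadamard and then fusing the resulting $Z_{-\pi/2}$ with the existing $Z_{\pi/2}$: the composite collapses to $H \cdot X_\alpha$, and by the colour change for a 1-input 0-output spider this equals the XY-plane effect $Z_\alpha$. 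Renaming $y$ as the ``new'' $x$ (it now carries all the original neighbours of $x$ plus an edge to the new degree-one vertex) and the original $x$ as the ``new'' $x'$ (it is now of degree one, connected only to the new $x$) yields exactly the labelled open graph $(G',I,O,\ld')$ claimed by the proposition.

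The main obstacle is bookkeeping: one has to track carefully which vertex holds which neighbourhood and which composite measurement effect after each of the three rewrite steps, and confirm that the ZX-calculus simplifications really do collapse the composite effects on the two vertices to a Pauli $Y$-measurement on one side and an $\XYm$-plane measurement on the other. Once this is verified the proof is essentially a four-diagram chain extending the three-diagram chain in the proof of Proposition~\ref{prop:YZtoXY} by one additional step for the local complementation.
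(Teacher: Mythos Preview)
Your approach is correct and essentially the paper's: both compose $Z$-insertion, one local complementation, and one pivot (each flow-preserving by Lemmas~\ref{lem:Z-insert}, \ref{lem:local-complementation}, \ref{lem:pivot}); the paper merely performs the local complementation about the new degree-one vertex \emph{before} the pivot rather than, as you do, about the now-degree-one original vertex \emph{after} it. One small bookkeeping slip: the final effect on $y$ should read $Z_{\pi/2}\cdot H\cdot(\text{$X$-dot})$, not $Z_{\pi/2}\cdot(\text{$X$-dot})$---you correctly tracked the Hadamard from the pivot on $x$ but dropped it on $y$; with the Hadamard restored the effect does reduce to a Pauli-$Y$ measurement as claimed.
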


\begin{proof}
	Consider the following sequence of rewrites.
		\begin{center}
		\begin{tikzpicture}[scale=0.7]
	\begin{pgfonlayer}{nodelayer}
		\node [style=Z dot] (0) at (-3.5, 0) {};
		\node [style=X phase dot] (1) at (-2, 0) {$\alpha$};
		\node [style=none] (2) at (-4.5, 0.5) {};
		\node [style=none] (3) at (-4.5, -0.5) {};
		\node [style=none] (4) at (-4.5, 0) {$\vdots$};
		\node [style=Z dot] (14) at (0.75, 0.5) {};
		\node [style=X phase dot] (15) at (2.25, 0.5) {$\alpha$};
		\node [style=none] (16) at (-0.25, 1) {};
		\node [style=none] (17) at (-0.25, 0) {};
		\node [style=none] (18) at (-0.25, 0.5) {$\vdots$};
		\node [style=Z dot] (19) at (0.75, -0.5) {};
		\node [style=X dot] (20) at (2.25, -0.5) {};
		\node [style=Z dot] (21) at (4.75, 0.5) {};
		\node [style=none] (23) at (3.75, 1) {};
		\node [style=none] (24) at (3.75, 0) {};
		\node [style=none] (25) at (3.75, 0.5) {$\vdots$};
		\node [style=Z dot] (26) at (4.75, -0.5) {};
		\node [style=none] (29) at (-1, 0) {$=$};
		\node [style=none] (30) at (3, 0) {$=$};
		\node [style=none] (33) at (-3.5, 0.5) {$x$};
		\node [style=none] (34) at (0.75, 1) {$x$};
		\node [style=none] (35) at (0.75, -1) {$x'$};
		\node [style=none] (36) at (4.75, 1) {$x$};
		\node [style=none] (37) at (4.75, -1) {$x'$};
		\node [style=Z phase dot] (38) at (-2.75, 0) {$\frac{\pi}{2}$};
		\node [style=Z phase dot] (39) at (1.5, 0.5) {$\frac{\pi}{2}$};
		\node [style=X phase dot] (40) at (5.75, 0.5) {$\alpha$};
		\node [style=X phase dot] (41) at (5.75, -0.5) {$\frac{\pi}{2}$};
		\node [style=Z dot] (42) at (8.25, 0.5) {};
		\node [style=none] (43) at (7.25, 1) {};
		\node [style=none] (44) at (7.25, 0) {};
		\node [style=none] (45) at (7.25, 0.5) {$\vdots$};
		\node [style=Z dot] (46) at (8.25, -0.5) {};
		\node [style=none] (47) at (8.25, 1) {$x'$};
		\node [style=none] (48) at (8.25, -1) {$x$};
		\node [style=Z phase dot] (51) at (9.25, 0.5) {$\frac{\pi}{2}$};
		\node [style=Z phase dot] (52) at (9.25, -0.5) {$\alpha$};
		\node [style=none] (53) at (6.5, 0) {$=$};
	\end{pgfonlayer}
	\begin{pgfonlayer}{edgelayer}
		\draw (0) to (2.center);
		\draw (0) to (3.center);
		\draw (14) to (16.center);
		\draw (14) to (17.center);
		\draw [style=hadamard edge] (14) to (19);
		\draw (20) to (19);
		\draw (21) to (23.center);
		\draw (21) to (24.center);
		\draw [style=hadamard edge] (21) to (26);
		\draw (0) to (38);
		\draw (38) to (1);
		\draw (14) to (39);
		\draw (39) to (15);
		\draw (21) to (40);
		\draw (26) to (41);
		\draw (42) to (43.center);
		\draw (42) to (44.center);
		\draw [style=hadamard edge] (42) to (46);
		\draw (42) to (51);
		\draw (46) to (52);
	\end{pgfonlayer}
\end{tikzpicture}
	\end{center}
	
	Here we insert a $Z$-measured vertex $x'$ connected to the $XZ$-measured vertex $x$, perform local complementation about $x'$, then pivot along the edge $(x,x')$. Each of these rewrites preserves the existence of Pauli flow, thus the resulting pattern has Pauli flow.
\end{proof}

Using the two previous propositions, we are able to re-write any XZ- and YZ-planar measurements into a Pauli measurement plus an XY-measurement. This implies the following.

\begin{proposition}\label{prop:all-XY}
	Let $(G,I,O,\lambda)$ be an arbitrary MBQC-form diagram with Pauli flow. Then there exists an equivalent diagram $(G', I', O', \lambda')$ with Pauli flow where $\lambda'(v) \in \{X, Y, XY\}$ for all $v \in V'$. 
\end{proposition}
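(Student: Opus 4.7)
The plan is a straightforward inductive application of the preceding results, iterating until no vertex remains with a label outside of $\{X,Y,\XYm\}$. Specifically, I would first observe that the set of "bad" measurement labels to eliminate is $\{Z,\XZm,\YZm\}$, and that each such label type can be removed by a transformation that has already been shown to preserve the existence of Pauli flow: $Z$-measured vertices by Lemma~\ref{lem:Z-delete}, $\YZm$-measured vertices by Proposition~\ref{prop:YZtoXY}, and $\XZm$-measured vertices by Proposition~\ref{prop:XZtoXY}.

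I would then set up an induction on $n(\Gamma) := |\{v \in V\setminus O : \ld(v) \in \{Z, \XZm, \YZm\}\}|$, the number of non-output vertices carrying a "bad" label. The base case $n(\Gamma)=0$ is immediate, since then $\ld$ already takes values in $\{X,Y,\XYm\}$ on all non-output vertices (outputs are unmeasured and so fall outside the condition). For the inductive step, pick any non-output vertex $x$ with $\ld(x)\in\{Z,\XZm,\YZm\}$ and apply the corresponding rewrite. Crucially, each of the three rewrites either removes $x$ outright ($Z$-deletion) or changes $\ld(x)$ to $X$ or $Y$ and introduces a single new vertex $x'$ with $\ld'(x')=\XYm$; in every case, the newly introduced or relabelled vertices all have labels in $\{X,Y,\XYm\}$, so $n$ strictly decreases by one. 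Applying the inductive hypothesis to the resulting labelled open graph (which has Pauli flow by the cited lemma/propositions) yields the desired conclusion.

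The main thing to be careful about, and really the only non-routine point, is making sure that the rewrites can always be applied: the propositions require $x$ to be a non-output vertex (which is guaranteed since we only pick $x \in V\setminus O$), and $Z$-deletion does not require any genericity assumptions on neighbours. One should also note that inserting the fresh vertex $x'$ with $\ld'(x')=\XYm$ does not itself need to belong to $I$, so the input set $I$ can remain unchanged throughout, consistent with $I'=I$. Since each step strictly decreases $n$ and preserves Pauli flow, the process terminates in at most $|V\setminus O|$ iterations at a labelled open graph $(G',I',O',\ld')$ with Pauli flow and $\ld'(v)\in\{X,Y,\XYm\}$ for every non-output vertex, as required.
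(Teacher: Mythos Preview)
Your proposal is correct and follows essentially the same approach as the paper: both apply $Z$-deletion (Lemma~\ref{lem:Z-delete}), Proposition~\ref{prop:YZtoXY}, and Proposition~\ref{prop:XZtoXY} repeatedly until no $Z$-, $\YZm$-, or $\XZm$-labelled vertices remain, noting that each step preserves Pauli flow and introduces only labels in $\{X,Y,\XYm\}$. The only difference is cosmetic: the paper proceeds by type (delete all $Z$'s, then convert all $\YZm$'s, then all $\XZm$'s), whereas you frame it as a single induction on the count of bad labels.
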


\begin{proof}
	We begin by applying $Z$-deletion (Lemma~\ref{lem:Z-delete}) to all $Z$-measured vertices, leaving us with only $X$, $Y$, $XY$, $XZ$ and $YZ$ vertices. It remains to remove all $XZ$ and $YZ$ measurements.
	
	By Proposition~\ref{prop:YZtoXY}, we can convert every $YZ$-measured vertex into an $X$-measured vertex connected to an $XY$-measured vertex while preserving the existence of Pauli flow. Then, by Proposition~\ref{prop:XZtoXY} we can convert every $XZ$-measured vertex into a $Y$-measured vertex connected to an $XY$-measured vertex. We now only have $X$, $Y$ and $XY$ measured vertices remaining, and each rewrite rule used to get here preserves the existence of Pauli flow, thus the resulting graph has Pauli flow.
\end{proof}

\begin{remark}
	Note that Pauli flow is important here: 
the gflow conditions need not be satisfied if the newly-introduced  Pauli measurements were taken to be arbitrary XY-measurements instead.

For example, the first of the following two diagrams has a gflow $(g,\prec)$ with $g(a)=\{c\}$, $g(b)=\{d\}$, $g(x)=\{c,d,x\}$ and $a,b,x\prec c,d$.
The second diagram has Pauli flow by Proposition~\ref{prop:YZtoXY}, but it does not have gflow: any flow $(p,\prec')$ on the second diagram must have $x\in p(x')$ to satisfy $x'\in\odd{p(x')}$, as inputs $a,b$ do not appear in correction sets. Similarly, $x'\in p(x)$ as it is the only neighbour. Thus the gflow conditions would require $x\prec' x'$ and $x'\prec' x$ simultaneously, which is not possible.
\[
 \begin{tikzpicture}[scale=0.5, baseline=-0.1cm]
	\begin{pgfonlayer}{nodelayer}
		\node [style=none] (0) at (-4, 1.5) {};
		\node [style=none] (1) at (-4, -1.5) {};
		\node [style=Z dot] (2) at (-2, 1.5) {};
		\node [style=Z dot] (3) at (-2, -1.5) {};
		\node [style=Z dot] (4) at (0.5, 1.5) {};
		\node [style=Z dot] (5) at (0.5, -1.5) {};
		\node [style=none] (6) at (2, 1.5) {};
		\node [style=none] (7) at (2, -1.5) {};
		\node [style=Z phase dot] (8) at (-1.75, 2.5) {$\alpha$};
		\node [style=Z phase dot] (9) at (-1.75, -2.5) {$\beta$};
		\node [style=none] (10) at (-2.5, 2) {$a$};
		\node [style=none] (11) at (-2.5, -2) {$b$};
		\node [style=none] (12) at (1, 2) {$c$};
		\node [style=none] (13) at (1, -2) {$d$};
		\node [style=Z dot] (14) at (-2, 0) {};
		\node [style=X phase dot] (15) at (-1, 0) {$\gamma$};
		\node [style=none] (16) at (-2.75, 0) {$x$};
	\end{pgfonlayer}
	\begin{pgfonlayer}{edgelayer}
		\draw [style=hadamard edge] (3) to (5);
		\draw [style=hadamard edge] (2) to (4);
		\draw (0.center) to (2);
		\draw (4) to (6.center);
		\draw (1.center) to (3);
		\draw (5) to (7.center);
		\draw (8) to (2);
		\draw (3) to (9);
		\draw [style=hadamard edge] (2) to (14);
		\draw [style=hadamard edge] (14) to (3);
		\draw (14) to (15);
	\end{pgfonlayer}
\end{tikzpicture} \qquad\qquad\qquad\qquad \begin{tikzpicture}[scale=0.5, baseline=-0.1cm]
	\begin{pgfonlayer}{nodelayer}
		\node [style=none] (0) at (-4, 1.5) {};
		\node [style=none] (1) at (-4, -1.5) {};
		\node [style=Z dot] (2) at (-2, 1.5) {};
		\node [style=Z dot] (3) at (-2, -1.5) {};
		\node [style=Z dot] (4) at (0.5, 1.5) {};
		\node [style=Z dot] (5) at (0.5, -1.5) {};
		\node [style=none] (6) at (2, 1.5) {};
		\node [style=none] (7) at (2, -1.5) {};
		\node [style=Z phase dot] (8) at (-1.75, 2.5) {$\alpha$};
		\node [style=Z phase dot] (9) at (-1.75, -2.5) {$\beta$};
		\node [style=none] (10) at (-2.5, 2) {$a$};
		\node [style=none] (11) at (-2.5, -2) {$b$};
		\node [style=none] (12) at (1, 2) {$c$};
		\node [style=none] (13) at (1, -2) {$d$};
		\node [style=Z dot] (14) at (-2, 0) {};
		\node [style=Z phase dot] (15) at (-0.25, 0) {$\gamma$};
		\node [style=none] (16) at (-2.75, 0) {$x'$};
		\node [style=Z dot] (17) at (-1.25, 0.75) {};
		\node [style=Z dot] (18) at (-1, -0.75) {};
		\node [style=none] (19) at (-0.25, -0.75) {$x$};
	\end{pgfonlayer}
	\begin{pgfonlayer}{edgelayer}
		\draw [style=hadamard edge] (3) to (5);
		\draw [style=hadamard edge] (2) to (4);
		\draw (0.center) to (2);
		\draw (4) to (6.center);
		\draw (1.center) to (3);
		\draw (5) to (7.center);
		\draw (8) to (2);
		\draw (3) to (9);
		\draw [style=hadamard edge] (2) to (14);
		\draw [style=hadamard edge] (14) to (3);
		\draw (17) to (14);
		\draw (15) to (18);
		\draw [style=hadamard edge] (14) to (18);
	\end{pgfonlayer}
\end{tikzpicture}
\]
\end{remark}

\section{Subdividing an edge}
 	Research on flow-preserving rewrite rules so far has been geared towards optimization, which usually involves reducing the number of vertices in a pattern. Yet there are also cases where it is desirable to introduce new vertices. An example of this is the obfuscation protocol for blind quantum computing of~\cite{Cao_2022}, which used an unpublished rewrite rule proved by one of the authors. We give the proof below.
	
\begin{proposition}
	Let $G=(V,E)$ be a graph with vertices $V$ and edges $E$.
	Suppose the labelled open graph $(G,I,O,\ld)$ has Pauli flow.
	Pick an edge $\{v,w\}\in E$ and subdivide it twice, i.e.\ let $G':=(V',E')$, where $V':=V\cup\{v',w'\}$ contains two new vertices $v',w'$, and
	\[
	E' := (E\setminus\{\{v,w\}\})\cup\{\{v,w'\},\{w',v'\},\{v',w\}\}.
	\]
	Then $(G',I,O,\ld')$ has Pauli flow, where $\ld'(v')=\ld'(w')=X$ and $\ld'(u)=\ld(u)$ for all $u\in V\setminus O$.
\end{proposition}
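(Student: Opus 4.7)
The plan is to mirror the strategy used in Propositions~\ref{prop:YZtoXY} and \ref{prop:XZtoXY}, and express the double subdivision as a short composition of elementary flow-preserving rewrites drawn from the toolkit already established: Z-insertion (Lemma~\ref{lem:Z-insert}), local complementation (Lemma~\ref{lem:local-complementation}), and pivoting (Lemma~\ref{lem:pivot}). Because each rewrite in the sequence preserves Pauli flow, it will suffice to exhibit any sequence whose net graph-and-label effect agrees with the statement.

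First, I would use Lemma~\ref{lem:Z-insert} twice to introduce the two new vertices as Z-measured with carefully chosen attachments. A natural configuration is to insert $w'$ adjacent to both $v$ and $w$ (producing the triangle $\{v,w,w'\}$), then insert $v'$ adjacent to both $w$ and $w'$ (producing a second triangle sharing the edge $\{w,w'\}$). Each Z-insertion is allowed to connect the new vertex to any subset of existing vertices, and after these two steps the intermediate labelled open graph still has Pauli flow.

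Second, I would apply one or two operations centred at the newly-inserted vertices --- either local complementations or pivots --- whose effect on the graph is predictable because $w'$ and $v'$ have small, explicitly-known neighbourhoods. The goal is to simultaneously delete the original edge $\{v,w\}$ along with the extra edges introduced by the two triangles (leaving only the desired path $v - w' - v' - w$), and to convert the labels of $w'$ and $v'$ from Z to X via the Hadamards produced by pivoting, exactly as the Z-measured vertex $x'$ is converted to an XY-measurement in the final step of the proof of Proposition~\ref{prop:YZtoXY}.

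The main obstacle will be pinning down the precise sequence so that the cumulative measurement-effect modifications on every vertex other than $w'$ and $v'$ cancel out, while those on $w'$ and $v'$ compose exactly to the transformation turning a Z-measurement into an X-measurement, and so that every original edge other than $\{v,w\}$ is toggled an even number of times overall. This bookkeeping amounts to tracking symmetric differences of edge sets on the graph side, together with composing the $X_{-\pi/2}$ and $Z_{\pi/2}$ phase gates prescribed by Lemmas~\ref{lem:local-complementation} and \ref{lem:pivot} on the measurement-effect side, in the same style as the calculations concluding the proofs of Propositions~\ref{prop:YZtoXY} and \ref{prop:XZtoXY}. Once such a sequence is in hand, Pauli-flow preservation follows immediately from composing the elementary lemmas.
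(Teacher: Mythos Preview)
Your overall strategy---composing two $Z$-insertions with a pivot---is exactly what the paper does, but your specific choice of attachments makes the bookkeeping fail rather than merely become tedious. With $w'$ attached to both $v$ and $w$, and $v'$ attached to both $w$ and $w'$, the two new vertices acquire $w$ as a \emph{common} neighbour. A pivot along $\{v',w'\}$ then applies a $Z_\pi$ phase to $w$ (changing its measurement label), and a direct computation of the three local complementations shows that the resulting edge set is not the desired path $v\!-\!w'\!-\!v'\!-\!w$ but rather contains a stray edge. So the ``main obstacle'' you flag is real, and it is not clear your configuration can be salvaged without further insertions or complementations that may themselves disturb $v$ and $w$.

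The paper avoids this entirely by attaching the two new $Z$-vertices as a \emph{path in parallel} with the original edge: the first new vertex is connected only to $v$, the second only to $w$ and to the first new vertex. Now the pivot vertices have disjoint exclusive neighbourhoods $\{v\}$ and $\{w\}$ and no common neighbours, so the single pivot along the edge between them toggles exactly the edge $\{v,w\}$ (removing it), swaps the attachments of the two new vertices, and applies Hadamards only to the new vertices themselves, turning their $Z$-measurements into $X$-measurements. No phase lands on $v$ or $w$, and no further cleanup is needed. The lesson is that the freedom in $Z$-insertion should be used to keep the pivot vertices' neighbourhoods as small and disjoint as possible; once you do that, the whole proof is two $Z$-insertions plus one pivot.
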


\begin{proof}
	We may subdivide an edge by inserting two $Z$-measured vertices as shown in the following diagram, then pivoting about these two $Z$-measured vertices.
	\begin{center}
		\begin{tikzpicture}[scale=0.7]
	\begin{pgfonlayer}{nodelayer}
		\node [style=Z dot] (0) at (-4, 0) {};
		\node [style=Z dot] (1) at (-2.25, 0) {};
		\node [style=none] (4) at (-4.75, 0.75) {};
		\node [style=none] (5) at (-4.75, -0.75) {};
		\node [style=none] (6) at (-1.5, -0.75) {};
		\node [style=none] (7) at (-1.5, 0.75) {};
		\node [style=Z dot] (8) at (5.75, 0) {};
		\node [style=Z dot] (9) at (8.5, 0) {};
		\node [style=none] (12) at (5, 0.75) {};
		\node [style=none] (13) at (5, -0.75) {};
		\node [style=none] (14) at (9.25, -0.75) {};
		\node [style=none] (15) at (9.25, 0.75) {};
		\node [style=Z dot] (16) at (6.5, 0.75) {};
		\node [style=Z dot] (17) at (7.75, 0.75) {};
		\node [style=X dot] (18) at (6.5, 1.5) {};
		\node [style=X dot] (19) at (7.75, 1.5) {};
		\node [style=none] (20) at (-0.75, 0) {$=$};
		\node [style=Z dot] (21) at (11.75, 0) {};
		\node [style=Z dot] (22) at (14.75, 0) {};
		\node [style=none] (25) at (10.75, 0.75) {};
		\node [style=none] (26) at (10.75, -0.75) {};
		\node [style=none] (27) at (15.75, -0.75) {};
		\node [style=none] (28) at (15.75, 0.75) {};
		\node [style=Z dot] (29) at (12.75, 0) {};
		\node [style=Z dot] (30) at (13.75, 0) {};
		\node [style=none] (33) at (4.25, 0) {$=$};
		\node [style=Z dot] (34) at (12.75, 0.75) {};
		\node [style=Z dot] (35) at (13.75, 0.75) {};
		\node [style=Z dot] (36) at (0.75, 0) {};
		\node [style=Z dot] (37) at (2.75, 0) {};
		\node [style=none] (38) at (0, 0.75) {};
		\node [style=none] (39) at (0, -0.75) {};
		\node [style=none] (40) at (3.5, -0.75) {};
		\node [style=none] (41) at (3.5, 0.75) {};
		\node [style=Z dot] (42) at (1.5, 0.75) {};
		\node [style=X dot] (44) at (1.5, 1.5) {};
		\node [style=none] (45) at (10, 0) {$=$};
		\node [style=none] (46) at (-4.75, 0) {$\vdots$};
		\node [style=none] (47) at (-1.5, 0) {$\vdots$};
		\node [style=none] (48) at (0, 0) {$\vdots$};
		\node [style=none] (49) at (3.5, 0) {$\vdots$};
		\node [style=none] (50) at (5, 0) {$\vdots$};
		\node [style=none] (51) at (9.25, 0) {$\vdots$};
		\node [style=none] (52) at (10.75, 0) {$\vdots$};
		\node [style=none] (53) at (15.75, 0) {$\vdots$};
	\end{pgfonlayer}
	\begin{pgfonlayer}{edgelayer}
		\draw (0) to (4.center);
		\draw (0) to (5.center);
		\draw (6.center) to (1);
		\draw (1) to (7.center);
		\draw [style=hadamard edge] (0) to (1);
		\draw (8) to (12.center);
		\draw (8) to (13.center);
		\draw (14.center) to (9);
		\draw (9) to (15.center);
		\draw [style=hadamard edge] (8) to (9);
		\draw [style=hadamard edge] (8) to (16);
		\draw [style=hadamard edge] (16) to (17);
		\draw [style=hadamard edge] (17) to (9);
		\draw (18) to (16);
		\draw (19) to (17);
		\draw (21) to (25.center);
		\draw (21) to (26.center);
		\draw (27.center) to (22);
		\draw (22) to (28.center);
		\draw [style=hadamard edge] (21) to (29);
		\draw [style=hadamard edge] (29) to (30);
		\draw [style=hadamard edge] (30) to (22);
		\draw (34) to (29);
		\draw (35) to (30);
		\draw (36) to (38.center);
		\draw (36) to (39.center);
		\draw (40.center) to (37);
		\draw (37) to (41.center);
		\draw [style=hadamard edge] (36) to (37);
		\draw [style=hadamard edge] (36) to (42);
		\draw (44) to (42);
	\end{pgfonlayer}
\end{tikzpicture}
	\end{center}
As inserting $Z$-measured vertices and pivoting both preserve the existence of Pauli flow, subdividing an edge also preserves the existence of Pauli flow.
\end{proof}

\section{Splitting a vertex}
Each of the previously mentioned Pauli-flow preserving rewrite rules only changes measurement angles by integer multiples of $\frac{\pi}{2}$. Here we introduce the first Pauli-flow preserving rewrite rule which allows us to change measurement angles arbitrarily.
To simplify the proof, the proposition requires that all measurements in the pattern are XY, $X$ or $Y$; by Proposition~\ref{prop:all-XY} this is without loss of generality.

\begin{proposition}\label{splittingproof}
	Let $G=(V,E)$ be a graph with vertices $V$ and edges $E$.
	Suppose the labelled open graph $(G,I,O,\ld)$ has Pauli flow and satisfies  $\ld(u)\in\{\XYm,X, Y\}$ for all $u\in O^c$.
	Pick a vertex $a\in O^c$ such that $\ld(a)=\XYm$ and split it, i.e.\ let $G':=(V',E')$, where $V':=V\cup\{x,a'\}$ contains two new vertices $x,a'$, and choose some (possibly empty) subset $W\sse N(x)$ such that
	\[
	E' := (E\setminus\{\{a,w\}\mid w\in W\})\cup\{\{a',w\}\mid w\in W\}\cup\{\{a,x\},\{x,a'\}\}.
	\]
	Then $(G',I,O,\ld')$ has Pauli flow, where $\ld'(x)=X$, $\ld'(a')=\XYm$, and $\ld'(u)=\ld(u)$ for all $u\in V\setminus O$.
	\begin{center}
	\begin{tikzpicture}
	\begin{pgfonlayer}{nodelayer}
		\node [style=Z dot] (0) at (-2.25, 0) {};
		\node [style=none] (2) at (-3, 0.5) {};
		\node [style=none] (3) at (-3, -0.5) {};
		\node [style=none] (6) at (-3, 0) {$\vdots$};
		\node [style=Z dot] (17) at (4.75, 0) {};
		\node [style=Z dot] (22) at (5.5, 0) {};
		\node [style=Z dot] (23) at (3.25, 0.75) {};
		\node [style=Z dot] (24) at (3.25, -0.75) {};
		\node [style=none] (25) at (2.25, 1.25) {};
		\node [style=none] (26) at (2.25, 0.25) {};
		\node [style=none] (27) at (2.25, -1.25) {};
		\node [style=none] (28) at (2.25, -0.25) {};
		\node [style=none] (29) at (2.25, 0.75) {$\vdots$};
		\node [style=none] (33) at (-0.5, 0) {$\mapsto$};
		\node [style=none] (34) at (-2.25, -0.5) {$a$};
		\node [style=none] (36) at (3.25, 0.25) {$a$};
		\node [style=none] (37) at (5, -0.5) {$x$};
		\node [style=none] (38) at (3.25, -1.25) {$a'$};
		\node [style=Z dot] (39) at (-1.5, 0) {$\alpha$};
		\node [style=Z dot] (40) at (4.25, -0.75) {$\alpha''$};
		\node [style=Z dot] (41) at (4.25, 0.75) {$\alpha'$};
		\node [style=none] (42) at (2.25, -0.75) {$\vdots$};
		\node [style=none] (43) at (-4, 0) {$N_G(a)$};
		\node [style=none] (44) at (1.75, -0.75) {$W$};
		\node [style=none] (45) at (1.25, 0.75) {$N_G(a)\setminus W$};
	\end{pgfonlayer}
	\begin{pgfonlayer}{edgelayer}
		\draw (2.center) to (0);
		\draw (0) to (3.center);
		\draw (22) to (17);
		\draw (25.center) to (23);
		\draw (23) to (26.center);
		\draw (27.center) to (24);
		\draw (24) to (28.center);
		\draw (39) to (0);
		\draw (41) to (23);
		\draw (40) to (24);
		\draw [style=hadamard edge] (23) to (17);
		\draw [style=hadamard edge] (17) to (24);
	\end{pgfonlayer}
\end{tikzpicture}
	\end{center}
\end{proposition}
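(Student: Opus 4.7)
The plan is to construct a Pauli flow $(p',\prec')$ on $(G',I,O,\ld')$ directly from a focused Pauli flow $(p,\prec)$ on $(G,I,O,\ld)$, which exists whenever the original has Pauli flow. Under the hypothesis $\ld(u)\in\{\XYm,X,Y\}$, focusing gives two crucial structural facts: the set $\odds{G}{p(u)}$ contains only $Y$-measured vertices outside of $\{u\}\cup O$, and every such $Y$-measured $v$ additionally satisfies the biconditional $v\in p(u) \iff v\in\odds{G}{p(u)}$.

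First, a short $\mathbb{F}_2$ computation relates $\odds{G'}{S}$ to $\odds{G}{S}$ for any $S\sse V$: the differences occur only on $\{a\}\cup W\cup\{a',x\}$ and are controlled by the parity of $|W\cap S|$ and whether $a\in S$. Guided by this, I define, for every $u\in V\setminus O$,
\[
  p'(u) := p(u) \;\cup\; \bigl(\{a'\}\text{ if }a\in p(u)\bigr) \;\cup\; \bigl(\{x\}\text{ if }|W\cap p(u)|\text{ is odd}\bigr),
\]
together with $p'(a'):=\{x\}$ and $p'(x):=p(a)\cup\{a'\}$. A direct verification then shows that $\odds{G'}{p'(u)}\cap V = \odds{G}{p(u)}$ with $a',x\notin\odds{G'}{p'(u)}$ for every $u\in V\setminus O$, and that $a'\in\odds{G'}{p'(a')}=\{a,a'\}$ and (using $a\notin p(a)$) $x\in\odds{G'}{p'(x)}$.

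Next I extend $\prec$ to $\prec'$ on $V'$ by placing $a'$ immediately before $a$ (so $u\prec a$ yields $u\prec' a'\prec' a$) and placing $x$ below every vertex that the conditions on $x$ force it to precede. For every $u\in V\setminus\{a\}$ the identity on $\odds{G'}{p'(u)}\cap V$ makes conditions P1--P9 reduce to those for $(p,\prec)$; the only genuinely new constraint is $u\prec' a'$ whenever $a\in p(u)$, and this is satisfied because the original P1 gives $u\prec a$. The conditions on $a$ itself go through analogously (noting that $x\in p'(a)$ when $|W\cap p(a)|$ is odd imposes no P1 constraint since $\lambda'(x)=X$), and the conditions on $a'$ reduce to the single constraint $a'\prec' a$.

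The main obstacle is verifying the conditions on $x$. Because $N_{G'}(x)=\{a,a'\}$, any valid $p'(x)$ must contain exactly one of $a,a'$, and hence $\odds{G'}{p'(x)}$ unavoidably contains some vertices of $W$ together with one of $a,a'$. The focused-flow property is exactly what tames this: for $v\in V\setminus(W\cup\{a\})$ the membership $v\in\odds{G'}{p'(x)}$ mirrors $v\in\odds{G}{p(a)}$, so outside $\{a\}\cup O$ such $v$ is $Y$-measured and the focused biconditional handles P3 automatically; for $v\in W$ the membership is flipped ($v\in\odds{G'}{p'(x)}$ iff $v\notin\odds{G}{p(a)}$), generating $x\prec' v$ constraints via P2 for XY- or X-measured $v$ and via P3 for $Y$-measured $v$ (the flip turns the focused biconditional into its negation, so $\neg(x\prec' v)$ is impossible). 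Together with $x\prec' a'$ (from $a'\in p'(x)$, which is XY-measured) and the resulting $x\prec' a$, these are all the new order constraints involving $x$. Since no vertex in the original system can force $v\prec' x$ (by design $x\notin\odds{G'}{p'(v)}$ for every $v\in V\cup\{a'\}$), the relations are consistent, $\prec'$ is a partial order, and $(p',\prec')$ is a Pauli flow on $(G',I,O,\ld')$.
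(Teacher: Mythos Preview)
Your argument is correct and follows the same overall strategy as the paper: start from a focused Pauli flow, define $p'(u)$ for every original vertex $u$ so that $\odds{G'}{p'(u)}=\odds{G}{p(u)}$ (your four-case formula coincides with the paper's), and then treat the two new vertices separately. The difference is that you interchange the roles of $x$ and $a'$: the paper sets $p'(x)=\{a'\}$ and lets $a'$ inherit (essentially) $p(a)$, whereas you set $p'(a')=\{x\}$ and let $x$ inherit $p(a)\cup\{a'\}$. Correspondingly, the paper makes $a'$ order-equivalent to $a$ while $x$ is minimal; you make $a'$ an immediate predecessor of $a$ and $x$ minimal.

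Both constructions yield valid Pauli flows. The paper's choice keeps the verification for the new vertices lighter: $\odds{G'}{p'(x)}=W\cup\{x\}$ and $\odds{G'}{p'(a')}=(\odds{G}{p(a)}\cup\{a'\})\setminus\{a\}$, so conditions P2 and P3 for $x$ follow from a single blanket relation $x\prec' w$ for all $w\in W$, and P3 for $a'$ is immediate from focusing. In your version the odd neighbourhood of $p'(x)$ intersects $W$ in the \emph{complement} of $\odds{G}{p(a)}\cap W$, so you must invoke the focused biconditional on $Y$-vertices to force $x\prec' v$ for every non-output $v\in W$, and separately chase outputs in $W$ through P2. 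This works (nothing ever forces $v\prec' x$, so the order remains acyclic), but it is the more delicate of the two routes. On the other hand, your correction set for $a'$ is trivially analysed. In short: same proof architecture, a symmetric choice made the other way, with the paper's choice buying slightly simpler bookkeeping for~$x$.
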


\begin{proof}
	Let $(p,\prec)$ be a focused Pauli flow for $(G,I,O,\ld)$; this exists as the labelled open graph has Pauli flow.
	Since all measurements are $XY$, $X$ or $Y$, the focusing conditions from Definition~\ref{def:focused_Pauli-flow} reduce to:
	\begin{itemize}
		\item For all $u \in V\setminus O$, if $v \in  \odds{G}{p(u)}\setminus (O \cup \{u\})$ then $\lambda(v) = Y$.
		\item For all $u \in V\setminus O$ and all $v \in V\setminus (O\cup\{u\})$ such that $\lambda(v) = Y$, we have $v \in p(u) \leftrightarrow v \in \odds{G}{p(u)}$.
	\end{itemize}
	Now, for all $u\in V\setminus O$, define
	\[
	p'(u) := \begin{cases}
		p(u)\cup\{x,a'\} &\text{if } a\in p(u) \text{ and } \abs{p(u)\cap W}\equiv 1\pmod 2 \\
		p(u)\cup\{a'\} &\text{if } a\in p(u) \text{ and } \abs{p(u)\cap W}\equiv 0\pmod 2 \\
		p(u)\cup\{x\} &\text{if } a\notin p(u) \text{ and } \abs{p(u)\cap W}\equiv 1\pmod 2 \\
		p(u) &\text{if } a\notin p(u) \text{ and } \abs{p(u)\cap W}\equiv 0\pmod 2,
	\end{cases}
	\]
	then it is straightforward to check that $\odds{G'}{p'(u)}= \odds{G}{p(u)}$.
	For example, in the first case, note that $\cup$ can be replaced by $\symd$ since $x,a'$ are new vertices that cannot appear in $p(u)$.
	Thus
	\begin{align*}
		\odds{G'}{p'(u)} &= \odds{G'}{p(u)\symd\{x,a'\}} \\
		&= \odds{G'}{p(u)} \symd \odds{G'}{\{x,a'\}} \\
		&= \odds{G'}{a} \symd \left(\Symdi{w\in p(u)\setminus (W\cup\{a\}\cup O)} \odds{G'}{w} \right) \symd \left(\Symdi{w\in (p(u)\cap W)\setminus O} \odds{G'}{w} \right) \symd \{a,x,a'\} \symd W \\
		&= \odds{G}{a}\symd \left(\Symdi{w\in p(u)\setminus (W\cup\{a\}\cup O)} \odds{G}{w} \right) \symd \left(\Symdi{w\in (p(u)\cap W)\setminus O} \odds{G}{w}\symd\{a,a'\} \right) \symd \{a,a'\} \\
		&= \odds{G}{p(u)},
	\end{align*}
	where the third step uses $\odds{G'}{a} = \odds{G}{a}\symd W \symd\{x\}$, and the final step uses $\abs{p(u)\cap W}\equiv 1\pmod 2$.
	
	If $a$ is an input in $G$, then it remains an input in $G'$ (the `input' is not a neighbour so cannot be transferred to $a'$ during the splitting process).
	This is without loss of generality: if $a'$ is desired to be an input, replace $W$ by $N_G(a)\setminus W$ and swap the labels $\alpha'$ and $\alpha''$ to get a \LOG{} that is equivalent to the desired one up to relabelling $a \leftrightarrow a'$.
	Having $a$ be an input is compatible with the correction set for $x$ in the next step below.
	If $a$ is not an input, there is actually a choice of whether to correct $x$ via $a$ or $a'$; we shall always choose the latter for some slight notational convenience.
	
	Let $p'(x) := \{a'\}$, and let $p'(a') := p'(a)\symd\{x\}$,
	resulting in the following odd neighbourhoods:
	\begin{align} 
	\odds{G'}{p'(x)} &= \odds{G'}{\{a'\}} = W \cup \{x\} \label{eq:1}\\
	\intertext{and}
	\odds{G'}{p'(a')}
	&= \odds{G'}{p'(a)} \symd \odds{G'}{\{x\}} \nonumber \\
	&= \odds{G}{p(a)} \symd \{a, a'\} \nonumber \\
	&= (\odds{G}{p(a)} \cup \{a'\})\setminus \{a\} \label{eq:2}
	\end{align}
	where the final step uses the fact that $a\in \odds{G}{p(a)}$ and $a' \not \in \odds{G}{p(a)}$ (since $a'$ is not even in $G$).
	
	Let $\prec'$ be the transitive closure of
	\[
		{\prec} \cup \Big\{(w,a')\mid w\prec a\Big\} \cup \Big\{(a',w)\mid a\prec w\Big\} \cup \Big\{(x,w)\mid w\in W\Big\} \cup \Big\{ (x,a')\Big\}.
	\]
	This is a partial order since $a'$ has the same relationships as $a$ (except for being a successor of $x$) and $x$ only has successors.
	
	The proof that $(p', \prec')$ is a Pauli flow for $G'$ can be found in Appendix~\ref{appendix}.
\end{proof}

We are able to obtain other useful rewrite rules as immediate corollaries of this.
\begin{corollary}
	Using Proposition~\ref{splittingproof} with $W = \emptyset$ and $\alpha'' = 0$, we obtain the following rule used in \cite{Cao_2022}.
	
	\begin{center}
		\begin{tikzpicture}
	\begin{pgfonlayer}{nodelayer}
		\node [style=Z dot] (0) at (-1, 1) {};
		\node [style=none] (1) at (-2, 1.25) {};
		\node [style=none] (2) at (-2, 0.75) {};
		\node [style=Z phase dot] (3) at (-1, 1.5) {$\alpha$};
		\node [style=none] (4) at (0, 1) {$=$};
		\node [style=Z dot] (9) at (2, 1) {};
		\node [style=none] (10) at (1, 1.25) {};
		\node [style=none] (11) at (1, 0.75) {};
		\node [style=Z phase dot] (12) at (2, 1.5) {$\alpha$};
		\node [style=Z dot] (13) at (3, 1) {};
		\node [style=Z dot] (14) at (4, 1) {};
		\node [style=Z dot] (15) at (3, 1.5) {};
		\node [style=Z dot] (16) at (4, 1.5) {};
	\end{pgfonlayer}
	\begin{pgfonlayer}{edgelayer}
		\draw (0) to (1.center);
		\draw (0) to (2.center);
		\draw (3) to (0);
		\draw (9) to (10.center);
		\draw (9) to (11.center);
		\draw (12) to (9);
		\draw [style=hadamard edge] (9) to (13);
		\draw [style=hadamard edge] (13) to (14);
		\draw (15) to (13);
		\draw (16) to (14);
	\end{pgfonlayer}
\end{tikzpicture}
	\end{center}
\end{corollary}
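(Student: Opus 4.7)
The plan is to derive this as an immediate specialization of Proposition~\ref{splittingproof}. In the proposition, the edge update
\[
E' := (E\setminus\{\{a,w\}\mid w\in W\})\cup\{\{a',w\}\mid w\in W\}\cup\{\{a,x\},\{x,a'\}\}
\]
collapses to $E' = E\cup\{\{a,x\},\{x,a'\}\}$ once we set $W=\emptyset$: no neighbour of $a$ is transferred to $a'$, and only the length-two chain $a-x-a'$ of Hadamard edges is appended. The proposition assigns $\lambda'(x)=X$, which in MBQC-form is drawn as a phase-free green spider attached to $x$ (see Table~\ref{tab:MBQC-ZX}).

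The final step is to note that choosing $\alpha''=0$ turns the $\XYm$-measurement on $a'$ into the effect $\bra{+_{XY,0}}=\bra{+}=\bra{+_{X,0}}$, so $a'$ too is depicted by a phase-free green spider; this produces exactly the right-hand diagram of the corollary. Pauli-flow preservation is then inherited directly from Proposition~\ref{splittingproof}, with no extra verification required.

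The only remaining question is whether the two diagrams really denote the same linear map (up to a non-zero scalar), as one expects of a rewrite rule. This is a short and routine ZX-calculation: the $\XYm$-measured leaf $a'$ at angle $0$ caps its incident Hadamard edge with the effect $\bra{+}\circ H \propto \bra{0}$, which then forces the phase-free Z-spider $x$ to output $\ket{0}$ on its remaining Hadamard edge toward $a$, so that via $H\ket 0 = \ket +$ the leg of $a$ joining the chain is simply plugged by $\ket +$. Applying $\ket +$ to a leg of a Z-spider with phase $\alpha$ leaves the spider unchanged up to scalar, recovering the left-hand diagram.

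I do not expect any genuine obstacle: the reasoning is entirely bookkeeping once one correctly identifies $\bra{+_{XY,0}}$ with $\bra{+_{X,0}}$. If there is a subtlety to flag, it is only that the proposition carries no \emph{a priori} relation between the original angle $\alpha$ and the two new angles $\alpha',\alpha''$, so the corollary should be read as the specific specialization $\alpha'=\alpha$, $\alpha''=0$ under which the ZX-equality above actually holds.
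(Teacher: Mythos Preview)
Your proposal is correct and follows exactly the route the paper intends: the corollary is stated in the paper without proof, as an immediate specialization of Proposition~\ref{splittingproof} with $W=\emptyset$ and $\alpha''=0$. Your additional verification of the ZX-equality and your remark that one must take $\alpha'=\alpha$, $\alpha''=0$ (so that $\alpha'+\alpha''=\alpha$, as in the neighbour-unfusion corollary) are more explicit than anything the paper writes, but entirely in the same spirit.
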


This rule can alternatively be derived in a more round-about way from $Z$-insertion and pivoting, but we next prove a rule that truly requires vertex splitting.

\section{Neighbour unfusion}
\label{s:neighbour-unfusion}
In \cite{korbinian_2022}, a rewrite rule called \textit{neighbour unfusion} was used to reduce the number of two-qubit gates in circuits via the ZX-calculus. Using neighbour unfusion allowed for the two-qubit gate count to be greatly reduced, but introduced a new problem: neighbour unfusion, which introduces two new qubits in each application, was found to not always preserve gflow.
Yet a flow is needed to be able to translate back to a circuit after the application of the two-qubit gate count reduction algorithm.
We now show that neighbour unfusion preserves the existence of Pauli flow, so circuit re-extraction is always possible.

\begin{corollary}
	By applying vertex splitting with $|W| = 1$, we obtain the following `neighbour unfusion' rule, where $\alpha = \alpha' + \alpha''$ (the measurement for the right-most vertex is not drawn as it can be measured in any plane, or even be an output).
	\begin{center}
		\begin{tikzpicture}[scale=0.7]
	\begin{pgfonlayer}{nodelayer}
		\node [style=Z dot] (0) at (-4, 0) {};
		\node [style=Z dot] (1) at (-1.75, 0) {};
		\node [style=none] (4) at (-4.25, -0.75) {};
		\node [style=none] (5) at (-3.75, -0.75) {};
		\node [style=none] (6) at (-2, -0.75) {};
		\node [style=none] (7) at (-1.5, -0.75) {};
		\node [style=none] (20) at (-0.75, 0) {$=$};
		\node [style=Z dot] (21) at (0.25, 0) {};
		\node [style=Z dot] (22) at (4, 0) {};
		\node [style=none] (25) at (0, -0.75) {};
		\node [style=none] (26) at (0.5, -0.75) {};
		\node [style=none] (27) at (3.75, -0.75) {};
		\node [style=none] (28) at (4.25, -0.75) {};
		\node [style=Z dot] (29) at (1.5, 0) {};
		\node [style=Z dot] (30) at (2.75, 0) {};
		\node [style=Z dot] (34) at (1.5, 0.75) {};
		\node [style=Z phase dot] (35) at (2.75, 0.75) {$\alpha''$};
		\node [style=Z phase dot] (45) at (-4, 0.75) {$\alpha$};
		\node [style=Z phase dot] (48) at (0.25, 0.75) {$\alpha'$};
	\end{pgfonlayer}
	\begin{pgfonlayer}{edgelayer}
		\draw (0) to (4.center);
		\draw (0) to (5.center);
		\draw (6.center) to (1);
		\draw (1) to (7.center);
		\draw [style=hadamard edge] (0) to (1);
		\draw (21) to (25.center);
		\draw (21) to (26.center);
		\draw (27.center) to (22);
		\draw (22) to (28.center);
		\draw [style=hadamard edge] (21) to (29);
		\draw [style=hadamard edge] (29) to (30);
		\draw [style=hadamard edge] (30) to (22);
		\draw (34) to (29);
		\draw (35) to (30);
		\draw (45) to (0);
		\draw (48) to (21);
	\end{pgfonlayer}
\end{tikzpicture}
	\end{center}
\end{corollary}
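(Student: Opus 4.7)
The plan is to recognise the neighbour unfusion rule as a direct instance of Proposition~\ref{splittingproof} specialised to $|W| = 1$, and then separately verify that the displayed diagram equation holds as linear maps under the stated relation $\alpha = \alpha' + \alpha''$.

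First I would invoke Proposition~\ref{splittingproof} by taking the vertex to be split to be $a$ (which carries an XY-measurement, as required), and choosing $W = \{w\}$ where $w$ is the single Hadamard-neighbour of $a$ drawn on the right of the diagram (whose measurement is unspecified). The proposition then produces exactly the right-hand diagram of the corollary: a new $X$-measured vertex $x$ and a new XY-measured vertex $a'$ are inserted on the path from $a$ to $w$, with the single neighbour in $W$ (namely $w$) detached from $a$ and reattached to $a'$, while all other neighbours of $a$ in $N_G(a)\setminus W$ (on the left of the diagram) remain neighbours of $a$. The flow-preservation part is then immediate: Proposition~\ref{splittingproof} guarantees the existence of Pauli flow in the resulting labelled open graph.

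What remains is to check the underlying ZX-diagram equation, which does not involve flow at all. This is a short spider-fusion computation: the measurement effect Z-spider with phase $\alpha$ on $a$ can be split, using spider fusion in reverse, into two Z-phase-spiders with phases $\alpha'$ and $\alpha''$ joined by a bare wire, provided $\alpha = \alpha' + \alpha''$. Inserting the identity $I = HH$ on the joining wire and absorbing the Hadamards into the neighbouring Z-spiders by the colour-change and Hadamard-edge rules produces precisely the MBQC-form gadget consisting of two graph-state Z-spiders connected by a Hadamard edge, one carrying a bare Z-dot (which is the $X$-measurement effect $\bra{+_{X,0}}$ according to Table~\ref{tab:MBQC-ZX}) and the other carrying the Z-phase-dot $\alpha''$ (the XY-measurement on $a'$). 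Together with the pre-existing Hadamard edges to $a$'s other neighbours and to $w$, this reproduces the right-hand diagram.

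I do not anticipate a serious obstacle: the flow-theoretic content has already been discharged in the proof of Proposition~\ref{splittingproof}, and the diagrammatic equation is one of the standard ZX-calculus manipulations. The only point requiring care is the sign/convention translation between measurement angles and spider phases in Table~\ref{tab:MBQC-ZX}, which is what fixes the relation $\alpha = \alpha' + \alpha''$ (rather than some other combination) as the correct constraint on the unfused angles.
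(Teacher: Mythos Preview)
Your proposal is correct and matches the paper's treatment: the paper gives no separate proof for this corollary, as it is immediate from Proposition~\ref{splittingproof} with $|W|=1$, exactly as you outline. Your additional verification of the underlying ZX-diagram equation via spider (un)fusion and Hadamard insertion is sound and merely spells out what the paper leaves implicit.
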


Staudacher et al.\ \cite{korbinian_2022} state that, in the case of only XY-measurements, neighbour unfusion empirically fails to preserve gflow if the two vertices to which neighbour unfusion is applied are extracted to different qubits in the circuit extraction process.
While we have not fully formalised this idea, we give a condition which guarantees that neighbour unfusion preserves gflow.

\begin{proposition}\label{prop:neighbour-unfusion-gflow-sufficient}
Let $\Gamma = (G, I, O, \lambda)$ be a labelled open graph and  suppose $a,b$ are two adjacent vertices in $G$ with $\ld(a) = \ld(b) = \XYm$.
Suppose $\Gamma$ has focused gflow $(g,\prec)$ where $b\in g(a)$ and for all $w \in V \setminus \{a, b\}$, $w \prec b \implies w \prec a$ and $a \prec w\implies b \prec w$.
Let $\Gamma'$ be the labelled open graph after applying neighbour unfusion to $a$ and $b$.
Then $\Gamma'$ has gflow.
The same holds with the roles of $a$ and $b$ reversed.
\end{proposition}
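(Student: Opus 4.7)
The plan is to build an explicit gflow $(g', \prec')$ for $\Gamma'$ by taking the Pauli-flow construction from the proof of Proposition~\ref{splittingproof} (specialised to $W=\{b\}$) and strengthening the partial order so that it remains valid when the new vertex $x$ is treated as measured in the $\XYm$ plane (at angle $0$) rather than as Pauli-$X$. The two monotonicity hypotheses on $a$ and $b$ are exactly what allows the strengthened order to stay antisymmetric.

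Start by passing to a focused gflow $(g, \prec)$ on $\Gamma$, which exists because every labelled open graph with gflow has a focused one. Since $\lambda(b)=\XYm$, focusing forces $b \notin \odds{G}{g(a)}$, and any element of $\odds{G}{g(u)}\setminus (O \cup \{u\})$ is measured in $\XZm$ or $\YZm$. Define $g'$ by the recipe from the proof of Proposition~\ref{splittingproof}: for $u \in V \setminus O$ adjoin $x$ to $g(u)$ whenever $b \in g(u)$ and adjoin $a'$ whenever $a \in g(u)$, and set $g'(x) := \{a'\}$, $g'(a') := g(a)$. The odd-neighbourhood calculation from that proof transports verbatim to yield $\odds{G'}{g'(u)} = \odds{G}{g(u)}$ for every $u \in V \setminus O$, together with $\odds{G'}{g'(x)} = \{x, b\}$ and $\odds{G'}{g'(a')} = (\odds{G}{g(a)} \setminus \{a\}) \cup \{a'\}$, so the $\XYm$-gflow parity conditions ($u \notin g'(u)$ and $u \in \odds{G'}{g'(u)}$) are immediate in every case. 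Take $\prec'$ to be the transitive closure of
\[
\prec\ \cup\ \{(w,a') \mid w \prec a\}\ \cup\ \{(a',w) \mid a \prec w\}\ \cup\ \{(x,b),(x,a')\}\ \cup\ \{(u,x) \mid b \in g(u)\},
\]
i.e.\ the order from the splitting proof together with the final family. The final family ensures $u \prec' x$ whenever $x \in g'(u)$, including the case $u=a$ since $b \in g(a)$.

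Assuming $\prec'$ is a partial order, the remaining gflow verifications are routine: they combine the odd-neighbourhood equalities above with the inherited relations on $a'$ and the directly-added relations $x \prec' b$, $x \prec' a'$ and $u \prec' x$ for $b \in g(u)$. The main obstacle is therefore antisymmetry of $\prec'$, and this is where the hypotheses do the essential work: together they imply that no vertex in $V \setminus \{a, b\}$ lies strictly between $a$ and $b$ in $\prec$. Any cycle in $\prec'$ must use some new relation and therefore must pass through $x$ or $a'$. A cycle through $a'$ but not $x$ would demand some $v$ with both $v \prec a$ and $a \prec v$, contradicting antisymmetry of $\prec$; a cycle through $x$ would demand some $u$ with $b \in g(u)$ (so $u \prec b$ and, by the first hypothesis, $u=a$ or $u \prec a$) that is simultaneously a $\prec'$-successor of $x$, forcing $u \in \{a', b\} \cup \{w : a \prec w\} \cup \{w : b \prec w\}$. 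Each of these possibilities contradicts either the hypotheses or the antisymmetry of $\prec$, ruling out the cycle and so establishing that $\prec'$ is indeed a partial order.
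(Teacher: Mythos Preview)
Your argument is correct but takes a somewhat different route from the paper's.  You specialise the correction sets from Proposition~\ref{splittingproof} directly, in particular taking $g'(x)=\{a'\}$ (so that $\odds{G'}{g'(x)}=\{x,b\}$), and then enlarge the partial order by adding the families $\{(w,a')\mid w\prec a\}$, $\{(a',w)\mid a\prec w\}$ and $\{(u,x)\mid b\in g(u)\}$ to force the new gflow conditions for $x$ and $a'$; the two monotonicity hypotheses are then used to rule out cycles.  The paper instead chooses $g'(x)=g(b)\cup\{a'\}$, which makes the odd neighbourhood of $g'(x)$ focused away from $b$, and can therefore get by with the much simpler order obtained as the transitive closure of ${\prec}\cup\{(a,x),(x,a'),(a',b)\}$, i.e.\ a linear insertion of $x,a'$ between $a$ and $b$.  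In the paper's version the hypotheses are used not for antisymmetry (which is immediate) but to check that $x\in g'(v)\Rightarrow v\prec' x$ (via $v\prec b\Rightarrow v\prec a$) and that $w\in g'(a')\Rightarrow a'\prec' w$ (via $a\prec w\Rightarrow b\prec w$).  Both constructions are valid; yours reuses the splitting machinery verbatim at the cost of a more delicate order-theoretic argument, while the paper's tweak to $g'(x)$ buys a shorter and more transparent verification.
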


\begin{proof}
	Consider a labelled open graph which has a focused gflow $(g,\prec)$ and contains the subdiagram
	\begin{center}
		\begin{tikzpicture}[scale=0.7]
	\begin{pgfonlayer}{nodelayer}
		\node [style=Z dot] (0) at (-1, 0) {};
		\node [style=Z dot] (1) at (1.25, 0) {};
		\node [style=none] (4) at (-1.25, -0.75) {};
		\node [style=none] (5) at (-0.75, -0.75) {};
		\node [style=none] (6) at (1, -0.75) {};
		\node [style=none] (7) at (1.5, -0.75) {};
		\node [style=Z phase dot] (45) at (-1, 0.75) {$\alpha$};
		\node [style=none] (46) at (-1.75, 0) {$a$};
		\node [style=none] (47) at (2, 0) {$b$};
		\node [style=Z phase dot] (48) at (1.25, 0.75) {$\beta$};
	\end{pgfonlayer}
	\begin{pgfonlayer}{edgelayer}
		\draw (0) to (4.center);
		\draw (0) to (5.center);
		\draw (6.center) to (1);
		\draw (1) to (7.center);
		\draw [style=hadamard edge] (0) to (1);
		\draw (45) to (0);
		\draw (48) to (1);
	\end{pgfonlayer}
\end{tikzpicture}
	\end{center}
Assume without loss of generality that $b \in g(a)$ and that for all $w \in V \setminus \{a, b\}$, $w \prec b$ implies $w \prec a$ and $a \prec w$ implies $b \prec w$; the other case is symmetric. Neighbour unfusion yields a labelled open graph $\Gamma'=(G',I,O,\ld')$ with the following subdiagram, where $\alpha'+\alpha''=\alpha$:
\begin{equation}\label{eq:after_neighbour_unfusion}
	\begin{tikzpicture}[scale=0.7, baseline=-0.1cm]
	\begin{pgfonlayer}{nodelayer}
		\node [style=Z dot] (21) at (-1.75, 0) {};
		\node [style=Z dot] (22) at (2, 0) {};
		\node [style=none] (25) at (-2, -0.75) {};
		\node [style=none] (26) at (-1.5, -0.75) {};
		\node [style=none] (27) at (1.75, -0.75) {};
		\node [style=none] (28) at (2.25, -0.75) {};
		\node [style=Z dot] (29) at (-0.5, 0) {};
		\node [style=Z dot] (30) at (0.75, 0) {};
		\node [style=Z dot] (34) at (-0.5, 0.75) {};
		\node [style=Z phase dot] (35) at (0.75, 0.75) {$\alpha''$};
		\node [style=Z phase dot] (48) at (-1.75, 0.75) {$\alpha'$};
		\node [style=Z phase dot] (49) at (2, 0.75) {$\beta$};
		\node [style=none] (50) at (-2.5, 0) {$a$};
		\node [style=none] (51) at (2.75, 0) {$b$};
		\node [style=none] (52) at (-0.5, -0.5) {$x$};
		\node [style=none] (53) at (0.75, -0.5) {$a'$};
	\end{pgfonlayer}
	\begin{pgfonlayer}{edgelayer}
		\draw (21) to (25.center);
		\draw (21) to (26.center);
		\draw (27.center) to (22);
		\draw (22) to (28.center);
		\draw [style=hadamard edge] (21) to (29);
		\draw [style=hadamard edge] (29) to (30);
		\draw [style=hadamard edge] (30) to (22);
		\draw (34) to (29);
		\draw (35) to (30);
		\draw (48) to (21);
		\draw (49) to (22);
	\end{pgfonlayer}
\end{tikzpicture}
\end{equation}
We can construct a focused gflow for the new pattern by defining the correction sets as follows:
\[
	g'(v) = \begin{cases} g(v) \cup \{x, a'\} &\text{if } a \in g(v) \wedge b \in g(v) \\ g(v) \cup \{a'\} &\text{if } a \in g(v) \wedge b \not \in g(v) \\ g(v)\cup \{x\} &\text{if } a \not \in g(v) \wedge b \in g(v)\\ g(b) \cup \{a'\} &\text{if } v=x\\ g(a)	 &\text{if } v= a' \\ g(v) &\text{otherwise.}\end{cases}
\]
This choice leaves invariant the odd neighbourhoods of the correction sets of any original (non-output) vertices.
Furthermore we have
\[
 \odds{G'}{g'(x)}
 = \odds{G'}{g(b)\cup\{a'\}}
 = \odds{G'}{g(b)\symd\{a'\}}
 = \odds{G}{g(b)}\symd\{x,b\}
\]
since $a,a'\notin g(b)$, and
\[
 \odds{G'}{g'(a')}
 = \odds{G'}{g(a)}
 = \odds{G}{g(a)}\symd\{a,a'\}
\]
since $b\in g(a)$.
Therefore $x\in\odds{G'}{g'(x)}$ and $a'\in\odds{G'}{g'(a')}$ as desired, and furthermore the correction sets for the new vertices are focused.
Take $\prec'$ to be the transitive closure of ${\prec} \cup \{(a,x), (x, a'), (a', b)\}$, then $(g', \prec')$ is a focused gflow for $\Gamma'$: Firstly, the relation $\prec'$ is a strict partial order.
To show that the gflow conditions are satisfied by $(g',\prec')$, it suffices to consider that the modified correction sets are compatible with the new partial order:
\begin{itemize}
 \item For any $v\in V$, we have $a'\in g'(v)$ only if $a\in g(v)$.
  The latter implies $v\prec a$ and thus $v\prec' a\prec' a'$.
 \item For any $v\in V$, we have $x\in g'(v)$ only if $b\in g(v)$.
  The latter implies $v\prec b$ and thus by assumption $v\prec a$.
  Then, as in the previous case, $v\prec' a\prec' x$.
 \item If $w\in g'(x)$, then either $w=a'$ or $w\in g(b)$.
  The former is straightforward as $x\prec' a'$ by definition.
  For the latter, we have $b\prec w$ since $(g,\prec)$ is a gflow, and thus $x\prec' b\prec' w$.
 \item If $w\in g'(a')$, then $w\in g(a)$.
  This implies $a\prec w$ since $(g,\prec)$ is a gflow, and furthermore $b\prec w$ by assumption.
  Thus, $a'\prec' b\prec' w$.
\end{itemize}
All of the other gflow conditions are satisfied as $(g, \prec)$ is a gflow for $\Gamma$.
\end{proof}

We will illustrate the neighbour unfusion process with an example that shows some choices of unfusion which do preserve gflow and others which do not.

\begin{example}
	Consider the following MBQC-form diagram, which appeared in a different context in~\cite{miyazaki_analysis_2015}.
	\begin{center}
		\begin{tikzpicture}[scale=0.6]
	\begin{pgfonlayer}{nodelayer}
		\node [style=Z dot] (0) at (-2.25, 1) {};
		\node [style=Z dot] (1) at (-2.25, -0.75) {};
		\node [style=Z dot] (2) at (-0.75, -2.25) {};
		\node [style=Z dot] (3) at (1, -2.25) {};
		\node [style=Z dot] (4) at (2.25, -0.75) {};
		\node [style=Z dot] (5) at (2.25, 1) {};
		\node [style=none] (6) at (-3.25, -0.75) {};
		\node [style=none] (7) at (-3.25, 1) {};
		\node [style=none] (8) at (3.25, -0.75) {};
		\node [style=none] (9) at (3.25, 1) {};
		\node [style=Z phase dot] (10) at (-2.25, 2) {$\gamma_1$};
		\node [style=Z phase dot] (11) at (-2.25, -1.75) {$\gamma_2$};
		\node [style=Z phase dot] (12) at (-0.75, -3.25) {$\alpha$};
		\node [style=Z phase dot] (13) at (1, -3.25) {$\beta$};
		\node [style=none] (14) at (-2.5, 0.5) {$i_1$};
		\node [style=none] (15) at (-2.5, -0.25) {$i_2$};
		\node [style=none] (16) at (-0.25, -2.75) {$a$};
		\node [style=none] (17) at (1.5, -2.75) {$b$};
		\node [style=none] (18) at (2.5, 0.5) {$o_1$};
		\node [style=none] (19) at (2.5, -1.25) {$o_2$};
	\end{pgfonlayer}
	\begin{pgfonlayer}{edgelayer}
		\draw [style=hadamard edge] (1) to (2);
		\draw [style=hadamard edge] (2) to (3);
		\draw [style=hadamard edge] (3) to (4);
		\draw [style=hadamard edge] (1) to (4);
		\draw [style=hadamard edge] (1) to (3);
		\draw [style=hadamard edge] (3) to (0);
		\draw [style=hadamard edge] (0) to (5);
		\draw [style=hadamard edge] (0) to (4);
		\draw [style=hadamard edge] (5) to (1);
		\draw (7.center) to (0);
		\draw (6.center) to (1);
		\draw (5) to (9.center);
		\draw (4) to (8.center);
		\draw (10) to (0);
		\draw (11) to (1);
		\draw (12) to (2);
		\draw (3) to (13);
	\end{pgfonlayer}
\end{tikzpicture}
	\end{center}
	This has a focused gflow $(g, \prec)$ with $g(i_1) = \{a, o_2\}$, $g(i_2) = \{a,b, o_2\}$, $g(a) = \{b, o_2\}$ and $g(b) = \{o_1, o_2\}$ and partial order $i_1, i_2 \prec a \prec b \prec o_1, o_2$.
	Then neighbour unfusion along one of the edges $(i_2,a)$, $(a,b)$ or $(b,o_2)$ preserves gflow by Proposition~\ref{prop:neighbour-unfusion-gflow-sufficient}.
	
	On the other hand, the pair $(i_2,o_2)$ satisfies the condition $o_2\in g(i_2)$ but satisfies neither $w \prec o_2 \implies w \prec i_2$ nor $i_2 \prec w\implies o_2 \prec w$ since $a$ and $b$ sit in between the pair in the partial order.
	Applying neighbour unfusion to $i_2$ and $o_2$ does not preserve the existence of gflow since the odd neighbourhood of $\{o_1,o_2\}$ would become $\{i_2,i_2',b\}$ and thus no vertex can be corrected solely using the outputs in the resulting diagram.
	An analogous argument holds for the pair $(i_2, o_1)$ for which $o_1\notin g(i_2)$.
\end{example}

We now show that, for MBQC-form diagrams with equal numbers of inputs and outputs, the condition $b\in g(a)$ (or instead $a\in g(b)$) is necessary for neighbour unfusion to preserve gflow.

\begin{proposition}\label{prop:neighbour-unfusion-necessary}
Let $\Gamma = (G, I, O, \lambda)$ be a labelled open graph with $\abs{I}=\abs{O}$ with focused gflow $(g,\prec)$.
Suppose $a,b$ are two adjacent vertices in $G$ with $\ld(a) = \ld(b) = \XYm$.
Let $\Gamma'$ be the labelled open graph after applying neighbour unfusion to $a$ and $b$ in $\Gamma$, and suppose $\Gamma'$ has gflow.
Then we must have $b\in g(a)$ or $a\in g(b)$ for the focused gflow on $\Gamma$.
\end{proposition}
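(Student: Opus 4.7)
\medskip\noindent\textit{Plan.}
The plan is to take any focused gflow $(g',\prec')$ on $\Gamma'$, contract the inserted path $a-x-a'-b$ back to an edge to obtain a focused gflow $(\tilde g,\tilde\prec)$ on $\Gamma$ with $b\in\tilde g(a)$ or $a\in\tilde g(b)$, and then invoke uniqueness of focused gflow for $|I|=|O|$ to transfer the property to the given $(g,\prec)$.

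First I would read off the gflow conditions at the two new vertices. Treating the X-measured $x$ as an XY-measurement at angle $0$, the condition $v\in\odds{G'}{g'(v)}$ applied at $v=x$ and $v=a'$, combined with $N_{G'}(x)=\{a,a'\}$ and $N_{G'}(a')=\{x,b\}$, forces each of $|g'(x)\cap\{a,a'\}|$ and $|g'(a')\cap\{x,b\}|$ to be exactly $1$. Together with the partial-order constraints $v\in g'(x)\Rightarrow x\prec' v$ and $v\in g'(a')\Rightarrow a'\prec' v$, only two sign patterns survive antisymmetry of $\prec'$: either $a'\in g'(x)$ with $b\in g'(a')$ (giving $x\prec' a'\prec' b$), or the symmetric $a\in g'(x)$ with $x\in g'(a')$ (giving $a'\prec' x\prec' a$). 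The two mixed patterns either produce a $2$-cycle in $\prec'$ or can be folded into the above by XORing $g'(x)$ with $g'(a')$ while preserving focusedness.

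Next I would contract to obtain $(\tilde g,\tilde\prec)$ on $\Gamma$: set $\tilde g(u):=g'(u)\cap V$ for $u\in V\setminus\{a,b\}$, and in the first pattern set $\tilde g(a):=g'(a')$ and $\tilde g(b):=g'(b)\cap V$; the second pattern is symmetric, swapping $a$ and $b$. The bookkeeping reduces to the identity
\[
\odds{G'}{S}\cap V \;=\; \odds{G}{S\cap V}\;\symd\;(s_a\cdot\{b\})\;\symd\;(s_b\cdot\{a\}),
\]
where $s_a=[a\in S]\oplus[a'\in S]$ and $s_b=[b\in S]\oplus[x\in S]$, which turns every gflow condition for $\tilde g$ on $\Gamma$ into a gflow condition already known to hold for $g'$ on $\Gamma'$. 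The order $\tilde\prec$ is obtained by deleting $x,a'$ from $\prec'$; it remains a valid partial order because $x$ and $a'$ only contribute comparabilities sandwiched between $a$ and $b$. In the first pattern $b\in g'(a')=\tilde g(a)$; in the second, a symmetric rewriting yields $a\in\tilde g(b)$; so $b\in\tilde g(a)\vee a\in\tilde g(b)$ in either case.

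The main obstacle, and the point where $|I|=|O|$ enters essentially, is the transfer from $\tilde g$ to the hypothesised $(g,\prec)$. In that regime the correction sets of a focused gflow on XY-measured vertices are pinned down by the underlying labelled open graph---since the implemented linear map is a unitary, the XY-correction sets are forced by the input/output cut---so $\tilde g(a)=g(a)$ and $\tilde g(b)=g(b)$, and the required alternative $b\in g(a)\vee a\in g(b)$ follows. Making this uniqueness step rigorous, either by a short linear-algebraic argument or by citing an existing uniqueness result for focused gflows with equal numbers of inputs and outputs, is the delicate part; the case analysis folding the mixed patterns into the two main ones is tedious but routine once focusedness is used in full.
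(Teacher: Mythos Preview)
Your overall strategy---analyse the gflow conditions at the two new degree-$2$ vertices, contract back to $\Gamma$, then invoke uniqueness of focused gflow for $|I|=|O|$---is reasonable, but the case analysis in the first step has a genuine gap.

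From $x\in\odds{G'}{g'(x)}$ and $a'\in\odds{G'}{g'(a')}$ you correctly extract that exactly one of $a,a'$ lies in $g'(x)$ and exactly one of $x,b$ lies in $g'(a')$. Of the four resulting patterns, $(a'\in g'(x),\,x\in g'(a'))$ does produce a $2$-cycle, and $(a'\in g'(x),\,b\in g'(a'))$ and $(a\in g'(x),\,x\in g'(a'))$ are the two cases you keep. But the remaining mixed pattern $(a\in g'(x),\,b\in g'(a'))$ does \emph{not} force any cycle in $\prec'$: it only gives $x\prec' a$ and $a'\prec' b$, which are compatible. Your proposed fix of ``XORing $g'(x)$ with $g'(a')$'' does not move you out of this case either: since $a'\notin g'(x)$, $a'\notin g'(a')$, $x\notin g'(x)$ and $x\notin g'(a')$, the symmetric difference leaves both $|{\cdot}\cap\{a,a'\}|$ for $x$ and $|{\cdot}\cap\{x,b\}|$ for $a'$ governed by the same bits as before (or, if $a\in g'(a')$, actually destroys the parity condition at $x$). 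So this pattern survives your argument, and your contraction recipe does not cover it.

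The paper eliminates this case using a tool you did not invoke: the \emph{reversibility} of focused gflow on XY-only patterns with $|I|=|O|$. After restricting to the XY-measured subpattern, the reverse gflow $g''_{rev}$ satisfies $u\in g''_{rev}(v)\Leftrightarrow v\in g''(u)$, and the condition $x\in\odds{G''}{g''_{rev}(x)}$ then says that exactly one of $x\in g''(a)$, $x\in g''(a')$ holds (and similarly exactly one of $a'\in g''(x)$, $a'\in g''(b)$). Combining these ``backward'' parity constraints with the forward ones and antisymmetry of $\prec''$ forces exactly the two chains $x\prec'' a'\prec'' b$ or $a'\prec'' x\prec'' a$, ruling out the problematic mixed pattern. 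This is where $|I|=|O|$ actually enters in the paper's argument, not only at a final uniqueness step.

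A secondary difference: rather than defining $\tilde g$ by hand and verifying all the gflow conditions via your $\odds{G'}{S}\cap V$ identity, the paper realises the inverse of neighbour unfusion as a composite of known gflow-preserving rewrites (pivot along $(x,a')$, then $Z$-deletion, then the phase-gadget identity), which immediately yields a focused gflow on $\Gamma$ with $b\in g(a)$ or $a\in g(b)$ without further bookkeeping. Your direct contraction could be made to work once the case analysis is repaired, but it is more laborious.
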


\begin{proof}
Suppose we have a pattern $\Gamma'$ with the subdiagram \eqref{eq:after_neighbour_unfusion} and assume that $\Gamma'$ has a focused gflow $(g',\prec')$.
Let $\Gamma''=(G'',I,O,\ld'')$ be the induced sub-pattern containing only those vertices of $G'$ that are either outputs or measured in the XY-plane; this must include all inputs since those cannot be measured in planes XZ or YZ.
This new measurement pattern still contains the subdiagram \eqref{eq:after_neighbour_unfusion} and it has gflow \cite[Lemma~3.15]{Backens_2021}.
In fact, since $(g',\prec')$ is focused, it implicitly follows from \cite[Proposition~3.14 and Lemma~3.15]{Backens_2021} that the gflow of the new pattern is just the restriction of the old gflow function to a smaller domain, and this is still focused; denote it by $(g'',\prec'')$.

Now every focused gflow in a pattern with only XY-plane measurements and equal numbers of inputs and outputs can be reversed in a very strict sense:
Let $\Gamma'''=(G'',O,I,\ld''')$ be the reversed pattern with the roles of inputs and outputs swapped and $\ld'''$ mapping all non-outputs to XY.
Then there exists a focused gflow $(g''_{rev}, \prec''_{rev})$ for $\Gamma'''$ where $\prec''_{rev}$ is the reverse of $\prec''$ and $u \in g''_{rev}(v)$ if and only if $v \in g''(u)$ \cite{mhalla2011graph}, cf.\ also \cite[Corollary~2.47]{Backens_2021}.

As $x$ is $\XYm$-measured and has two neighbours, to satisfy $x \in\odds{G''}{g''(x)}$ and $x \in\odds{G''}{g''_{rev}(x)}$ we require the following to hold, where $\oplus$ is the exclusive-or operator:
 \begin{center}
 	$(a \in g''(x) \wedge x \in g''(x')) \oplus (x' \in g''(x) \wedge x \in g''(a))$.
 \end{center}
 As $a'$ is also $\XYm$-measured and has two neighbours, by the same reasoning we obtain:
 \begin{center}
 	$(b \in  g''(a') \wedge x' \in g''(x)) \oplus (x\in g''(a') \wedge a' \in g''(b))$.
 \end{center} 
Then, as we cannot have both $x \in g''(a')$ and $a' \in g''(x)$, we have either that $a \in g''(x)$, $x\in g''(a')$ and $a'\in g''(b)$ or that $b \in g''(a')$, $a' \in g''(x)$ and $x \in g''(a)$ for $(g'', \prec'')$ to be a gflow for $\Gamma''$.
But $(g'', \prec'')$ is the restriction of $(g',\prec')$ to the XY-measured vertices in $\Gamma'$.
Thus either $a \in g'(x)$, $x\in g'(a')$ and $a'\in g'(b)$ or that $b \in g'(a')$, $a' \in g'(x)$ and $x \in g'(a)$.

Now, consider the following sequence of rewrites corresponding to the inverse of neighbour unfusion:
\begin{center}
	\begin{tikzpicture}[scale=0.6]
	\begin{pgfonlayer}{nodelayer}
		\node [style=Z dot] (21) at (-5.75, 0) {};
		\node [style=Z dot] (22) at (-2, 0) {};
		\node [style=none] (25) at (-6, -0.75) {};
		\node [style=none] (26) at (-5.5, -0.75) {};
		\node [style=none] (27) at (-2.25, -0.75) {};
		\node [style=none] (28) at (-1.75, -0.75) {};
		\node [style=Z dot] (29) at (-4.5, 0) {};
		\node [style=Z dot] (30) at (-3.25, 0) {};
		\node [style=Z dot] (34) at (-4.5, 0.75) {};
		\node [style=Z phase dot] (35) at (-3.25, 0.75) {$\alpha''$};
		\node [style=Z phase dot] (48) at (-5.75, 0.75) {$\alpha'$};
		\node [style=Z phase dot] (49) at (-2, 0.75) {$\beta$};
		\node [style=none] (50) at (-6.5, 0) {$a$};
		\node [style=none] (51) at (-1.25, 0) {$b$};
		\node [style=none] (52) at (-4.5, -0.5) {$x$};
		\node [style=none] (53) at (-3.25, -0.5) {$a'$};
		\node [style=none] (54) at (-0.5, 0) {$=$};
		\node [style=Z dot] (55) at (1, 0) {};
		\node [style=Z dot] (56) at (4.75, 0) {};
		\node [style=none] (57) at (0.75, -0.75) {};
		\node [style=none] (58) at (1.25, -0.75) {};
		\node [style=none] (59) at (4.5, -0.75) {};
		\node [style=none] (60) at (5, -0.75) {};
		\node [style=Z dot] (61) at (3.5, 1) {};
		\node [style=Z dot] (62) at (2.25, 1) {};
		\node [style=Z phase dot] (65) at (1, 0.75) {$\alpha'$};
		\node [style=Z phase dot] (66) at (4.75, 0.75) {$\beta$};
		\node [style=none] (67) at (0.25, 0) {$a$};
		\node [style=none] (68) at (5.5, 0) {$b$};
		\node [style=none] (69) at (3.5, 0.5) {$x$};
		\node [style=none] (70) at (2.25, 0.5) {$a'$};
		\node [style=X phase dot] (71) at (2.25, 1.75) {$\alpha''$};
		\node [style=X dot] (72) at (3.5, 1.75) {};
		\node [style=none] (73) at (6.25, 0) {$=$};
		\node [style=Z dot] (74) at (7.75, 0) {};
		\node [style=Z dot] (75) at (11.5, 0) {};
		\node [style=none] (76) at (7.5, -0.75) {};
		\node [style=none] (77) at (8, -0.75) {};
		\node [style=none] (78) at (11.25, -0.75) {};
		\node [style=none] (79) at (11.75, -0.75) {};
		\node [style=Z dot] (81) at (9, 1) {};
		\node [style=Z phase dot] (82) at (7.75, 0.75) {$\alpha'$};
		\node [style=Z phase dot] (83) at (11.5, 0.75) {$\beta$};
		\node [style=none] (84) at (7, 0) {$a$};
		\node [style=none] (85) at (12.25, 0) {$b$};
		\node [style=none] (87) at (9, 0.5) {$a'$};
		\node [style=X phase dot] (88) at (9, 1.75) {$\alpha''$};
		\node [style=Z dot] (89) at (14.5, 0) {};
		\node [style=Z dot] (90) at (17.25, 0) {};
		\node [style=none] (91) at (14.25, -0.75) {};
		\node [style=none] (92) at (14.75, -0.75) {};
		\node [style=none] (93) at (17, -0.75) {};
		\node [style=none] (94) at (17.5, -0.75) {};
		\node [style=Z phase dot] (96) at (14.5, 0.75) {$\alpha' + \alpha''$};
		\node [style=Z phase dot] (97) at (17.25, 0.75) {$\beta$};
		\node [style=none] (98) at (13.75, 0) {$a$};
		\node [style=none] (99) at (18, 0) {$b$};
		\node [style=none] (102) at (13, 0) {$=$};
	\end{pgfonlayer}
	\begin{pgfonlayer}{edgelayer}
		\draw (21) to (25.center);
		\draw (21) to (26.center);
		\draw (27.center) to (22);
		\draw (22) to (28.center);
		\draw [style=hadamard edge] (21) to (29);
		\draw [style=hadamard edge] (29) to (30);
		\draw [style=hadamard edge] (30) to (22);
		\draw (34) to (29);
		\draw (35) to (30);
		\draw (48) to (21);
		\draw (49) to (22);
		\draw (55) to (57.center);
		\draw (55) to (58.center);
		\draw (59.center) to (56);
		\draw (56) to (60.center);
		\draw (65) to (55);
		\draw (66) to (56);
		\draw [style=hadamard edge] (55) to (56);
		\draw [style=hadamard edge] (55) to (62);
		\draw [style=hadamard edge] (62) to (61);
		\draw [style=hadamard edge] (61) to (56);
		\draw (71) to (62);
		\draw (72) to (61);
		\draw (74) to (76.center);
		\draw (74) to (77.center);
		\draw (78.center) to (75);
		\draw (75) to (79.center);
		\draw (82) to (74);
		\draw (83) to (75);
		\draw [style=hadamard edge] (74) to (75);
		\draw [style=hadamard edge] (74) to (81);
		\draw (88) to (81);
		\draw (89) to (91.center);
		\draw (89) to (92.center);
		\draw (93.center) to (90);
		\draw (90) to (94.center);
		\draw (96) to (89);
		\draw (97) to (90);
		\draw [style=hadamard edge] (89) to (90);
	\end{pgfonlayer}
\end{tikzpicture}
\end{center}
where we first pivot along the edge $(x, x')$, then apply $Z$-deletion to $x'$ and finally apply the phase gadget identity rule of \cite{kissinger_reducing_2020} to add the phase of $x'$ to that of $a$. Each of these rules preserves the existence of gflow, thus the inverse of neighbour unfusion preserves the existence of gflow. Moreover, if $x \in g'(a)$, $x'\in g'(x)$ and $b'\in g'(x')$, then after applying the inverse of neighbour unfusion we get a focused gflow $(g, \prec)$ for $\Gamma$ with $b \in g(a)$ (and similarly if $a\in g'(x)$, $x\in g'(x')$ and $x' \in g'(b)$ we get a focused gflow with $a \in g(b)$).
Therefore, if the measurement pattern after neighbour unfusion has gflow, then the original pattern has a focused gflow where $b$ is in the correction set of $a$, or a focused gflow where $a$ is in the correction set of~$b$.
%
\end{proof}

An analogous argument to the above works if $b$ is an output, in which case the only option is for $b$ to be in the correction set of $a$.
Therefore the above proposition covers all the cases relevant to Staudacher et al.'s work on patterns where all measurements are in the XY-plane.

\begin{example}
    The following two measurement patterns are related by neighbour unfusion along the edge between vertices $a$ and $b$:
	\begin{center}
		\begin{tikzpicture}[scale=0.5, baseline=-0.1cm]
	\begin{pgfonlayer}{nodelayer}
		\node [style=none] (0) at (-4, 1) {};
		\node [style=none] (1) at (-4, -1) {};
		\node [style=Z dot] (2) at (-2, 1) {};
		\node [style=Z dot] (3) at (-2, -1) {};
		\node [style=Z dot] (4) at (0.5, 1) {};
		\node [style=Z dot] (5) at (0.5, -1) {};
		\node [style=none] (6) at (2, 1) {};
		\node [style=none] (7) at (2, -1) {};
		\node [style=Z phase dot] (8) at (-1.75, 2) {$\alpha$};
		\node [style=Z phase dot] (9) at (-1.75, -2) {$\beta$};
		\node [style=none] (10) at (-2.5, 1.5) {$a$};
		\node [style=none] (11) at (-2.5, -1.5) {$b$};
		\node [style=none] (12) at (1, 1.5) {$c$};
		\node [style=none] (13) at (1, -1.5) {$d$};
	\end{pgfonlayer}
	\begin{pgfonlayer}{edgelayer}
		\draw [style=hadamard edge] (2) to (3);
		\draw [style=hadamard edge] (3) to (5);
		\draw [style=hadamard edge] (2) to (4);
		\draw (0.center) to (2);
		\draw (4) to (6.center);
		\draw (1.center) to (3);
		\draw (5) to (7.center);
		\draw (8) to (2);
		\draw (3) to (9);
	\end{pgfonlayer}
\end{tikzpicture} \qquad\qquad\qquad\qquad
		\begin{tikzpicture}[scale=0.5, baseline=-0.1cm]
	\begin{pgfonlayer}{nodelayer}
		\node [style=none] (0) at (-4, 1.5) {};
		\node [style=none] (1) at (-4, -1.5) {};
		\node [style=Z dot] (2) at (-2, 1.5) {};
		\node [style=Z dot] (3) at (-2, -1.5) {};
		\node [style=Z dot] (4) at (2, 1.5) {};
		\node [style=Z dot] (5) at (2, -1.5) {};
		\node [style=none] (6) at (4, 1.5) {};
		\node [style=none] (7) at (4, -1.5) {};
		\node [style=Z phase dot] (8) at (-1.5, 2.25) {$\alpha'$};
		\node [style=Z phase dot] (9) at (-1.5, -2.25) {$\beta$};
		\node [style=none] (10) at (-2.5, 2) {$a$};
		\node [style=none] (11) at (-2.5, -2) {$b$};
		\node [style=none] (12) at (2, 2.25) {$c$};
		\node [style=none] (13) at (2, -2.25) {$d$};
		\node [style=Z dot] (14) at (-2, 0.5) {};
		\node [style=Z dot] (15) at (-2, -0.5) {};
		\node [style=Z phase dot] (16) at (-3.25, -0.5) {$\alpha''$};
		\node [style=Z dot] (17) at (-3.25, 0.5) {};
		\node [style=none] (18) at (-1.25, 0.5) {$x$};
		\node [style=none] (19) at (-1.25, -0.5) {$a'$};
	\end{pgfonlayer}
	\begin{pgfonlayer}{edgelayer}
		\draw [style=hadamard edge] (3) to (5);
		\draw [style=hadamard edge] (2) to (4);
		\draw (0.center) to (2);
		\draw (4) to (6.center);
		\draw (1.center) to (3);
		\draw (5) to (7.center);
		\draw (8) to (2);
		\draw (3) to (9);
		\draw [style=hadamard edge] (2) to (14);
		\draw [style=hadamard edge] (14) to (15);
		\draw [style=hadamard edge] (15) to (3);
		\draw (17) to (14);
		\draw (16) to (15);
	\end{pgfonlayer}
\end{tikzpicture}
	\end{center}
	In the first pattern, $a$ and $b$ are both inputs and thus cannot appear in correction sets.
	Hence the pattern does not have a gflow where $a$ is in the correction set of $b$ or where $b$ is in the correction set of $a$.
	Yet it does have a gflow $(g, \prec)$ with $g(a) = \{c\}$, $g(b) = \{d\}$ and $a,b \prec c,d$.

For the second pattern to have a flow $(p, \prec')$, we require $a'\in p(x)$ and $x\in p(a')$ since both vertices need to be in the odd neighbourhood of their correction set and inputs cannot appear in correction sets. This diagram can therefore not have a gflow, as the gflow conditions would require that $x \prec' a'$ and $a' \prec' x$ simultaneously, so $\prec'$ would not be strict. This diagram does have a Pauli flow however, as the $X$-measured vertex $x$ does not need to come after $a'$ in the partial order in the case of Pauli flow. The Pauli flow satisfies $p(a)=\{c\}$, $p(b)=\{d\}$, $p(x) = \{d, a'\}$ and $p(a') = \{c, x\}$ with $x\prec a,b,a'\prec c,d$.
\end{example}


The sufficient condition for neighbour unfusion to preserve gflow in Proposition~\ref{prop:neighbour-unfusion-gflow-sufficient} and the necessary condition in Proposition~\ref{prop:neighbour-unfusion-necessary} do not quite match up: we leave the question of a condition that is both necessary and sufficient to future work.

\section{Conclusion}
We have introduced several rewrite rules which preserve the existence of Pauli flow, including the first flow-preserving rewrite rule which allows us to change phases arbitrarily, rather than just by multiples of $\frac{\pi}{2}$. An immediate corollary of this rule preserving Pauli flow is that the neighbour unfusion rule of \cite{korbinian_2022} also preserves Pauli flow, potentially leading to a reduced runtime for their two-qubit gate count reduction algorithm.

At present, the circuit extraction algorithm for diagrams with Pauli flow introduces more two-qubit gates than the corresponding circuit extraction algorithm for diagrams with gflow -- future work could involve using known work on Pauli gadget optimization, such as that of \cite{Cowtan_2020}, to reduce the number of two-qubit gates obtained when performing circuit extraction on diagrams with Pauli flow. 

Other future work could involve finding an analogous result to the stabiliser completeness proof of~\cite{McElvanney_2022} for a more general fragment of the MBQC-form ZX-calculus, using Proposition~\ref{splittingproof} to introduce phases that are not just integer multiples of $\frac{\pi}{2}$. 

\section*{Acknowledgements}

We would like to thank Korbinian Staudacher and Shuxiang Cao for bringing the topic of rewriting measurement patterns to add new qubits to our attention.  Additional thanks to Korbinian Staudacher for providing a counterexample to our original claim in Section~\ref{s:neighbour-unfusion} and for countless discussions during the development of this paper. We would also like to thank Will Simmons for useful conversations on related topics.
Finally, thanks to Piotr Mitosek for suggesting the formulation of the sufficient condition for neighbour unfusion to preserve gflow, and to Simon Perdrix for pointing out an edge case in Proposition~\ref{splittingproof}, which is now treated correctly.

\bibliographystyle{eptcs}
\bibliography{refs}

\appendix

\section{Splitting a vertex preserves the existence of Pauli flow}
\label{appendix}
The following case distinction forms part of the proof of Proposition~\ref{splittingproof}. 

Let $G'$, $p'$ and $\prec'$ be defined as in Proposition~\ref{splittingproof}. We shall show that $(p', \prec')$ satisfy the nine conditions of Pauli flow.

\begin{enumerate}[label={\textit{Claim \arabic*:}}, leftmargin=*, widest=a]
	\item For all $u\in O^c$, if $v\in p'(u)$ and $u\neq v$ and $\ld'(v)\notin\{X,Y\}$, then $u\prec' v$.
	\begin{itemize}
		\item For original vertices $u\in V \setminus O$, $v \in p'(u)$ implies $v \in p(u)$ or $v \in \{x, a'\}$. If $v \in p(u)$ then either $u \prec v$ and thus $u \prec' v$ by the definition of $\prec'$ or $\ld(u)=\ld'(u)\in\{X,Y\}$. If $v = x$ then $\lambda'(v) = X$ and thus we don't need to consider this case. Finally, if $v = a'$ then $a \in p(u)$, thus $u \prec a$ which gives us $u \prec' v=a'$ by the definition of $\prec'$.
		
		\item For $u = x$, the only element of $p(x)$ is $a'$ and we have $x\prec' a'$.
		
		\item For $u = a'$,  $v \in p'(a')$ implies $v \in p(a)$ or $v = x$.  For the latter case, $\lambda'(x) = X$ and thus we do not need to consider this. In the former case, $v \in p(a)$ gives us that $a \prec v$ or $\ld(v)=\ld'(v)\in\{X,Y\}$ as $(p, \prec)$ is a Pauli flow. So either $a'  \prec v$ by the definition of $\prec'$ or the property is trivial anyway.
	\end{itemize}
	
	\item For all $u\in O^c$, if $v\in\odds{G'}{p'(u)}$ and $u\neq v$ and $\ld'(v)\notin\{Y,Z\}$, then $u\prec' v$.
	\begin{itemize}
		\item For original vertices $u\in V\setminus O$, we have defined $p'$ in such a way that $\odds{G'}{p'(u)} = \odds{G}{p(u)}$, thus this property is inherited from $(p, \prec)$ being a Pauli flow.
		
		\item For $u = x$, we have $\odds{G'}{p'(x)} = W\cup\{x\}$, and $x\prec' w$ for any $w\in W$ by the definition of $\prec'$.
		
		\item For $u = a'$, $v \in \odds{G'}{p'(a')}$ and $v \not = a'$ implies $v \in \odds{G}{p(a)}$ by \eqref{eq:2}. As $a'$ has the same successors as $a$, we get that $a' \prec' v$ as desired.
	\end{itemize}
	
	\item For all $u\in O^c$, if $\neg(u\prec' v)$ and $u \not = v$ and $\ld'(v)=Y$, then $v\in p'(u) \longleftrightarrow v\in\odds{G'}{p'(u)}$.
	\begin{itemize}
		\item For original vertices $u \in V\setminus O$, this is inherited from $(p,\prec)$, as the only changes to correction sets and odd neighbourhoods involve adding or removing $x$ or $a'$, which are not $Y$-measured.
		
		\item For $u = x$, we have $p'(x)\cup\odds{G'}{p'(x)} = W\cup\{x, a'\}$. By the definition of the partial order, $x\prec a'$ and $x\prec' w$ for all $w\in W$, so the claim is trivially true.
		
		\item For $u = a'$,
		\[
		\lambda'(v) = Y \text{ and } v\in p'(a') \text{ implies } v \in p(a).
		\]
		As $(p, \prec)$ is a focused Pauli flow, we must have $v \in \odds{G}{p(a)}$, thus $v \in \odds{G'}{p'(a)}$ and finally $v \in \odds{G'}{p'(a')}$ by \eqref{eq:2}.
		
		For the other direction,
		\[
		\lambda'(v) = Y \text{ and } v\in \odds{G'}{p'(a')} \text{ implies } v \in \odds{G}{p(a)}.
		\]
		As $(p, \prec)$ is a Pauli flow,  $v \in p(a)$ thus $v \in p'(a)$ and finally $v \in p'(a')$, as desired.
		
	\end{itemize}
	
	\item For all $u\in O^c$, if $\ld'(u)=\XYm$, then $u\notin p'(u)$ and $u\in\odds{G'}{p(u)}$.
	\begin{itemize}
		\item If $u=a'$, then this is true by inspection.
		\item If $u=x$, then the claim is true trivially since $\ld'(x)\neq\XYm$.
		\item If $u\in V\setminus O$, then this property is inherited from $(p,\prec)$.
	\end{itemize}
	
	\item For all $u\in O^c$, if $\ld'(u)=\XZm$, then $u\in p'(u)$ and $u\in\odds{G'}{p'(u)}$.
	\begin{itemize}
		\item This is trivially true as we have no $XZ$-measured vertices.
	\end{itemize}
	
	\item For all $u\in O^c$, if $\ld'(u)=\YZm$, then $u\in p'(u)$ and $u\notin\odds{G'}{p'(u)}$.
	\begin{itemize}
		\item This is trivially true as we have no $YZ$-measured vertices.
	\end{itemize}
	
	\item For all $u\in O^c$, if $\ld'(u)=X$, then $u\in\odds{G'}{p'(u)}$.
	\begin{itemize}
		\item if $u=a'$, then this is true trivially since $\ld'(a')\neq X$.
		\item If $u= x$, then the claim is true by \eqref{eq:1}.
		\item If $u\in V\setminus O$, then this property is inherited from $(p,\prec)$.
	\end{itemize}
	
	\item For all $u\in O^c$, if $\ld'(u)=Z$, then $u\in p'(u)$.
	\begin{itemize}
		\item This is trivially true as we have no $Z$-measured vertices.
	\end{itemize}
	
	\item For all $u\in O^c$, if $\ld'(u)=Y$, then $u \in p'(u)$ and $u\notin \odds{G}{p'(u)}$, or $u \notin p'(u)$ and $u\in \odds{G}{p'(u)}$.
	\begin{itemize}
		\item This is true trivially for $a$, $x$ and $a'$, and inherited for all other vertices.
	\end{itemize}
\end{enumerate}
All nine properties are satisfied, therefore $(p',\prec')$ is a Pauli flow for $G'$.

\end{document}